\newtheorem{theorem}{Theorem}
\newtheorem{definition}[theorem]{Definition}
\newtheorem{lemma}[theorem]{Lemma}
\newtheorem{proposition}[theorem]{Proposition}
\newtheorem{remark}[theorem]{Remark}
\newenvironment{proof}[1][Proof]{\noindent\textbf{#1.} }{\ \rule{0.5em}{0.5em}}
\begin{document}

\title{Matricial representation of period doubling cascade}
\author{{\normalsize Luc\'{\i}a Cerrada}$^{a,b}${\normalsize , Jes\'{u}s San
Mart\'{\i}n}$^{a,c}${\normalsize ,} \\
{\normalsize a) Departamento de Matem\'{a}tica Aplicada, E.U.I.T.I. }\\
{\normalsize \ UPM. 28012-\ Madrid, SPAIN.}\\
{\normalsize \ b) Departamento de Matem\'{a}tica Aplicada y computaci\'{o}n,
}\\
{\normalsize \ ETS de Ingenier\'{\i}a ICAI.}\\
{\normalsize \ Universidad Pontificia Comillas de Madrid. }\\
{\normalsize \ 28015-Madrid, SPAIN.}\\
{\normalsize \ c) Departamento de F\'{\i}sica Matem\'{a}tica y de Fluidos,}\\
{\normalsize \ Facultad de Ciencias. Universidad Nacional de Educaci\'{o}n a
Distancia.}\\
{\normalsize \ 28040-Madrid, SPAIN.}\\
{\normalsize \ Corresponding author: Jes\'{u}s San Mart\'{\i}n. e-mail:
jsm@dfmf.uned.es}}
\maketitle

\begin{abstract}
Starting from the cycle permutation $\sigma _{2^{k}}$ associated with the $%
2^{k}$-periodic orbit of the period doubling cascade we obtain the inverse
permutation $\sigma _{2^{k}}^{-1}$. Then we build a matrix permutation
related to $\sigma _{2^{k}}^{-1}$, which includes the visiting order of the $%
2^{k}$-periodic orbit points.

After some manipulations a recurrence relation of matricial representation
of period doubling cascade is obtained. Finally the explicit matricial
representation is reached.
\end{abstract}

\section{Introduction}

The period doubling cascade \cite{FEIGENBAUM78, FEIGENBAUM79} is a very
well-known mechanism of transition to chaos and it is commonly observed in
science. It can be found in chemistry \cite{ISTVAN06}, biology \cite{INNOCENTI07},
physics \cite{LETELLIER07}, to mention a few fields. This can
give an idea of both its extraordinary importance and the mathematical foundations
that supports it. Any advance in the understanding or the reformulation of
period doubling cascade will have immediate effects in many fields as
those mentioned above.

In a period doubling cascade successive orbits of ever increasing (doubling) period are created
as a parameter control is varied. The process that underlies in this
phenomenon is well-known as well as the scaling among successive bifurcation
parameters \cite{FEIGENBAUM78, FEIGENBAUM79}. We are going to pay attention
to the problem not from analytical but topological point of view, where
several questions remain still open.

If the $2^{k}$-periodic orbit of the cascade is considered, then it can be
characterized by its symbolic sequence \cite{PAVUY66, METROPOLIS73}. That
is, a sequence like $CI_{1}I_{2\cdots }I_{2^{k}-1}$ where $I_{i\text{ }}$ can
be $R$ or $L$, depending on whether the points of the orbit are located to the right or
left of the critical point $C$ of the function undergoing the period
doubling cascade. A question immediately arises. How can we distinguish one $R
$ $\left( L\right) $ from the others? Which are their relative positions?.
This question is answered by the celebrated kneading theory \cite{MILNOR88}.
By using this theory every point of the orbit is associated with a number,
then you order the numbers one by one and you obtain the relative positions.
Obviously if the orbit has many points, let us say $2^{5000}$, it is better
to use some kind of sieve to sort the points. That is what is made in \cite%
{SMARTIN09}. In that paper it was found the permutation that orders the points
of the $2^{k}$ periodic orbit, $k$ arbitrary, in the period-doubling cascade. To get this goal, the $2^{k}$
points of the orbit are labeled by their natural position in the straight line
as follows $\left\{ C_{\left( 1,2^{k}\right) },\text{ }C_{\left(
2,2^{k}\right) },C_{\left( 2,2^{k}\right) },\cdots C_{\left(
2^{k},2^{k}\right) }\right\} $. Then the permutation $\ {\large \sigma }%
_{2^{k}}=\left( \sigma _{\left( 1,2^{k}\right) },\sigma _{\left(
2,2^{k}\right) },\cdots ,\sigma _{\left( 2^{k},2^{k}\right) }\right) $,
indicates $\sigma _{\left( i,2^{k}\right) }$, the number of iterates from $C$ needed to
get the point $C_{\left( i,\text{ }2^{k}\right) }$, that is, $f^{\sigma
_{\left( i,2^{k}\right) }}\left( C\right) =C_{\left( i,2^{k}\right) }$. The
inverse permutation ${\Large \sigma }_{2^{K}}^{-1}$ gives the order the
different points of the orbit are visited (see \cite{SMARTIN09} for more details).

Nonetheless some unsolved problems remain.

\begin{enumerate}
\item[i)] How to make $k\rightarrow \infty $ in the $2^{k}$-periodic orbit? Such
happens in a period doubling cascade. If we had a matricial representation
we would be able to do so because we have mathematical tools in algebra to
get it. Furthermore, in this way we would have a mathematical tool to study
analytically the transition to chaos, that is, when $k\rightarrow \infty$.

\item[ii)] How to deal with a dynamical system undergoing a period doubling cascade
that has $2^{m}$ points of $2^{k}$ period orbit $\left( m\leq k\right) $
occupied by particles. That is, the physical system has $2^{k}$ states
generated in a period doubling cascade, and $2^{m}$ of those states are
occupied by particles. Obviously, we can calculate the evolution of any
particle by using the permutation $\sigma _{2^{k}}$ for all of them, but in
this way we would have to use the permutation $2^{m}$ times, what is
completely useless if $2^{m}$ is very big. Matricial representation
is again the solution, given that we can represent $2^{k}$ states of the
system by a single column vector with $2^{k}$ rows, the rows are zero
except when they are occupied by particles. So the evolution of the system
is given by the matricial representation of $\sigma _{2^{k}}$ acting on the
column vector. On the other way, the particles occupying the $2^{m}$ points
can have physical properties: spin, charge, etc. since every particle evolution is
distinguishable. We shall discuss this topic later.

\end{enumerate}

The idea of using matrices in dynamical systems is not new, in fact, it is
naturally associated with them. Every periodic orbit is associated
univocally with a permutation and this permutation with its permutation
matrix. Matrices have a pivotal role in dynamical systems, for example, they
are used either to describe the minimal periodic orbit structure with the
transition matrix \cite{HALL94} or in kneading theory \cite{MILNOR88} with
the kneading matrix.

In this paper we are going to get a matricial representation of period
doubling cascade orbits, by using the permutation $\sigma _{2^{k}}$ given in
\cite{SMARTIN09}. In order to achieve this goal the paper is organized as
follows. Firstly, some propositions and lemmas are proven to get $\sigma
_{2^{k}}^{-1}$ from $\sigma _{2^{k}}$, given that the sought for matricial
representation is associated with $\sigma _{2^{k}}^{-1}$ instead of $\sigma
_{2^{k}}$. Then, after a few manipulations we arrive to a new permutation related
to $\sigma _{2^{k}}^{-1}$, the so-called disordered permutation. From this
new permutation we obtain the permutation matrix associated with it. Then we
get the matricial representation by a recurrence relation. Finally we obtain
an explicit matricial representation of period doubling cascade. After
theoretical results we will expose conclusions where new fundamental
questions will arise.

\section{Definitions and notation}

Along this paper, we will use basic concepts introduced in \cite{SMARTIN09}.
For the sake of self containing notation and clarity we repeat the following
definitions.

\begin{definition}\label{DEF_1}
The set $\left\{ C_{\left( 1,q\right) }^{\ast },C_{\left( 2,q\right) }^{\ast
}\ldots ,C_{\left( q,q\right) }^{\ast }\right\} $ will denote the ascending
\linebreak $\left( \text{descending}\right) $ cardinality order of the orbit
$\ O=\left\{ C,f\left( C\right) ,\cdots ,f^{q-1}\left( C\right) \right\} $
when the unimodal map $f$ has a minimum (maximum) in $C$.
The point $C_{\left( i,q\right) }^{\ast }$ is
defined as the $i$-th cardinal of the q-periodic orbit.
\end{definition}

\begin{definition}\label{DEF_2}
The natural number $\sigma _{\left( i,q\right) }$, $i=1,\cdots ,q$ will
denote the number of iterations of $f$ such that $f^{\sigma _{\left(
i,q\right) }}\left( C\right) =C_{\left( i,q\right) }^{\ast }$, $i=1,\cdots,q$.
\end{definition}

\begin{figure}[tbh]
\begin{center}
\includegraphics[width=0.65\textwidth]{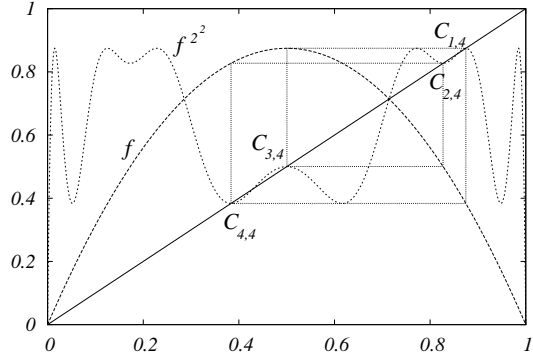}
\end{center}
\caption{\label{FIG_1}A $4$-periodic orbit. $f(C)=C^{\ast}_{1,4}$, $f^2 (C)=C^{\ast}_{4,4}$, $f^3 (C)=C^{\ast}_{2,4}$, $f^4 (C)=C^{\ast}_{3,4} = C$,
 with $\sigma_{1,4}=1$, $\sigma_{2,4}=3$, $\sigma_{3,4}=4$, $\sigma_{4,4}=2$. }
\end{figure}

\begin{remark}
The geometrical meaning of definitions \ref{DEF_1} and \ref{DEF_2} is depicted in figure \ref{FIG_1}. Introducing the usual order of $\mathbb{R}$ in the set of the real numbers $O=\{ C=f^4 (C), f(C), f^2 (C), f^3 (C) \}$ it results $ f^2 (C)< f^4 (C) < f^3 (C) < f(C) $, then we label these real numbers according to the usual (ascending or descending) order of $\mathbb{N}$, as $C^{\ast}_{4,4} < C^{\ast}_{3,4} < C^{\ast}_{2,4} < C^{\ast}_{1,4}$ (the label runs in the index $"i"$ of $C^{\ast}_{i,4}$). In this way, we can forget the real numbers given by $f^{\sigma(i,4)}$ and speak of $C^{\ast}_{i,4}$. That is the meaning of the cardinality order in a set whose cardinal is $q=4$. In fact, what is important is to know how many iterates are needed to move from $C^{\ast}_{i,q}$ to $C^{\ast}_{j,q}$, that is, to get from label $"i"$ to label $"j"$.
\end{remark}

\begin{definition}
We denote as $\sigma _{q}$ the permutation $\sigma _{q}=\left( \sigma
_{\left( 1,q\right) },\sigma _{\left( 2,q\right) ,\cdots ,}\sigma _{\left(
q,q\right) }\right) ,$ that is the $q$-tuple formed by the $\sigma _{\left(
i,q\right) }$.
\end{definition}

In particular, along this paper we will take $q=2^{k}$.

\begin{definition}\label{DEF_4}
Let $\left\{ a_{1},a_{2,\cdots ,}a_{n}\right\} $ be a sequence of real
numbers. We define the Reflection of the sequence $\left\{ a_{1},a_{2,\cdots
,}a_{n}\right\} $ with increment $\alpha$, denoted as $R\left(
a_{1},a_{2,\cdots ,}a_{n};\alpha \right) $, as the sequence of real numbers
given by
\begin{equation*}
R\left( a_{1},a_{2,\cdots ,}a_{n};\alpha \right) =\left\{ a_{1},a_{2,\cdots
,}a_{n},a_{n}+\alpha ,a_{n-1}+\alpha ,\cdots ,a_{1}+\alpha \right\}
\end{equation*}
\end{definition}

We shall introduce the rest of definitions when the need for them arises.

\section{Theorems}

Without loss of generality we shall work with unimodal maps having a
maximun at the critical point $C$.

The next theorem is proven in \cite{SMARTIN09}, we reformulate it, by using
definition \ref{DEF_4}, for our convenience as follows.

\begin{theorem}
\label{T_1}Let $\ f:I\subset
\mathbb{R}
\rightarrow I$ be an unimodal map, depending on a parameter, undergoing a
period doubling cascade. Let $\left\{ C_{\left( 1,2^{k}\right) }^{\ast
},C_{\left( 2,2^{k}\right) }^{\ast }\ldots ,C_{\left( 2^{k},2^{k}\right)
}^{\ast }\right\} $ be the cardinality ordering of the $2^{k}-$periodic
superstable orbit of the cascade, given by $f^{\sigma _{\left(
i,2^{k}\right) }}\left( C\right) =C_{\left( i,2^{k}\right) }^{\ast }$, %
$i=1,\cdots ,2^{k}.$ Then the elements of the permutation%
\begin{equation*}
{\large \sigma }_{2^{k}}=\left( \sigma _{\left( 1,2^{k}\right) },\sigma
_{\left( 2,2^{k}\right) },\cdots ,\sigma _{\left( 2^{k},2^{k}\right)
}\right) \equiv \left( \sigma _{2^{k}}\left( 1\right) ,\sigma _{2^{k}}\left(
2\right) ,\cdots ,\sigma _{2^{k}}\left( 2^{k}\right) \right)
\end{equation*}%
satisfy the recurrence relation%
\begin{equation}
\begin{array}{cc}
\sigma _{2^{k+1}}\left( n\right) =2\sigma _{2^{k}}\left( n\right) -1 &  \\
& n=1,2,3,\cdots ,2^{k} \\
\sigma _{2^{k+1}}\left( 2^{k}+n\right) =2\sigma _{2^{k}}\left(
2^{k}-n+1\right) &
\end{array}
\label{F_00}
\end{equation}
\end{theorem}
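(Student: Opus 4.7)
The plan is to prove both lines of (\ref{F_00}) by applying the uniqueness of the superstable $2^k$-cycle permutation (Metropolis--Stein--Stein) to the second iterate $\hat f := f^2$ restricted to each of its two invariant intervals.

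First I would partition the $2^{k+1}$-orbit. Since $f'(C) = 0$, the point $C$ is also a critical point of $\hat f$ and $\hat f^{2^k}(C) = C$, so standard renormalization produces an invariant interval $I \ni C$ on which $\hat f|_I$ is unimodal. Computing $\hat f''(C) = f'(f(C))\, f''(C)$, the period-doubling regime (where $f(C)$ lies in the decreasing branch of $f$, so $f'(f(C)) < 0$) together with $f''(C) < 0$ gives $\hat f''(C) > 0$: a minimum at $C$. The other invariant interval $I' = f(I)$ is disjoint from $I$ and sits above it, since $f(C) = \max_x f(x)$ lies in $I'$. Hence the even iterates $\{f^{2j}(C)\}_j$ populate $I$ and form the bottom $2^k$ cardinalities, while the odd iterates populate $I'$ and form the top $2^k$.

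For the bottom half, $\hat f|_I$ is a unimodal map with a superstable $2^k$-cycle (the orbit of $C$), so by MSS uniqueness its cardinality permutation is $\sigma_{2^k}$. Because $\hat f$ has a minimum at $C$, Definition \ref{DEF_1} orders its orbit ascendingly, while $f$ (maximum at $C$) orders the same set descendingly. The $n$-th descending $f$-cardinal of $I$ is therefore the $(2^k - n + 1)$-th ascending $\hat f$-cardinal, reached in $\sigma_{2^k}(2^k - n + 1)$ steps of $\hat f$ and hence $2\sigma_{2^k}(2^k - n + 1)$ steps of $f$, which is the second line of (\ref{F_00}).

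For the top half I would repeat the argument on $I'$. There the critical point of $\hat f|_{I'}$ is $\xi := f^{2^{k+1}-1}(C)$, because $f(\xi) = C$ gives $\hat f'(\xi) = f'(C)\, f'(\xi) = 0$, and $\hat f''(\xi) = f''(C)(f'(\xi))^2 < 0$ shows that $\xi$ is a maximum. The $\hat f$-orbit of $\xi$ is exactly the odd iterates, so MSS uniqueness once more yields permutation $\sigma_{2^k}$, and this time the descending convention of Definition \ref{DEF_1} (max at $\xi$) already matches the $f$-ordering, so no index reversal occurs. The $n$-th descending cardinal of $I'$ is thus $\hat f^{\sigma_{2^k}(n)}(\xi) = f^{2\sigma_{2^k}(n) + 2^{k+1} - 1}(C) \equiv f^{2\sigma_{2^k}(n) - 1}(C) \pmod{2^{k+1}}$, yielding the first line of (\ref{F_00}). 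The principal obstacle in the whole argument is the renormalization step itself --- checking that at the $2^{k+1}$-superstable parameter the intervals $I, I'$ can be chosen so $\hat f$ is globally unimodal on each and the MSS hypotheses genuinely apply --- after which the translation between $\hat f$-iterations and $f$-iterations (multiplication by $2$, plus a shift of $2^{k+1} - 1$ for the top half to move the iteration origin from $\xi$ to $C$) is mechanical.
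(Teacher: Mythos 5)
You should know at the outset that this paper contains no proof of Theorem \ref{T_1}: the authors state explicitly that it ``is proven in \cite{SMARTIN09}'' and merely restate it in their notation, so there is no internal argument to measure yours against. Judged on its own terms, your renormalization route is sound in outline and does produce the right recurrence: the splitting of the $2^{k+1}$-orbit into odd iterates (upper cluster $I'$, cardinal indices $1,\dots,2^k$) and even iterates (lower cluster $I$, indices $2^k+1,\dots,2^{k+1}$), the factor $2$ from converting $\hat f$-steps to $f$-steps, the index reversal $n\mapsto 2^k-n+1$ forced by the min-versus-max convention of Definition \ref{DEF_1} on $I$, and the shift $2\sigma_{2^k}(n)+2^{k+1}-1\equiv 2\sigma_{2^k}(n)-1$ on $I'$ all check out against the explicit low-order cases ($\sigma_{2^2}=(1,3,4,2)$, $\sigma_{2^3}=(1,5,7,3,4,8,6,2)$). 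What your sketch buys is a conceptual explanation of \emph{why} the recurrence \eqref{F_00} has its odd/even and order-reversing structure, which the paper simply takes as given. Three points need tightening before this is a proof. (i) The load-bearing step is the appeal to ``MSS uniqueness'': you need that \emph{every} unimodal map at the $2^k$-superstable parameter of its own cascade has the same cardinality permutation $\sigma_{2^k}$, in order to identify the permutation of $\hat f|_I$ and of $\hat f|_{I'}$ with that of $f$ at the previous superstable parameter. That universality of the \emph{cardinal ordering} (not just of the symbolic sequence) is precisely the content of \cite{SMARTIN09}, so as written your argument leans on the very result whose proof it is meant to replace; you would have to supply the classical kneading-theory step that the universal kneading sequence determines the ordering. (ii) You implicitly assert that the universal permutation of a minimum-type unimodal map in the ascending convention equals that of a maximum-type map in the descending convention; this is true (conjugate by $x\mapsto -x$) but must be said, since otherwise writing ``$\sigma_{2^k}$'' for the orbit of $\hat f|_I$ is unjustified. (iii) Your second-derivative computations ($\hat f''(C)=f'(f(C))f''(C)>0$, $\hat f''(\xi)=f''(C)(f'(\xi))^2<0$) presuppose a nondegenerate quadratic critical point, which ``unimodal'' does not grant; the min/max character of $\hat f$ on $I$ and $I'$ should instead be derived from monotonicity of $f$ on each side of $C$. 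None of these is fatal, but together with the renormalization existence step you yourself flag, they are where all the real work lives.
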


\bigskip

The reformulated theorem \ref{T_1} allows us to obtain the inverse
permutation of ${\large \sigma }_{2^{k}}$ by a recurrence relation.

\begin{proposition}
\label{P_1}The permutation, denoted by $\sigma _{2^{k}}^{-1}$, defined by the recurrence relation%
\begin{equation}
\begin{array}{c}
{\large \sigma }_{2^{k}}^{-1}\left( 2n-1\right) ={\large \sigma }%
_{2^{k-1}}^{-1}\left( n\right)  \\
\\
{\large \sigma }_{2^{k}}^{-1}\left( 2n\right) =2^{k}+1-{\large \sigma }%
_{2^{k-1}}^{-1}\left( n\right)
\end{array}%
1\leq n\leq 2^{k-1}\ \ n,k\in
\mathbb{N}
\label{F_0}
\end{equation}%
with seed value

\begin{equation*}
\sigma _{2^{0}}^{-1}\left( 1\right) =1
\end{equation*}%
is the inverse permutation of $\sigma _{2^{k}}$.
\end{proposition}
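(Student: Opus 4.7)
The plan is to prove by induction on $k$ that the permutation defined by the recurrence (\ref{F_0}) satisfies $\sigma_{2^k}(\sigma_{2^k}^{-1}(m)) = m$ for all $m \in \{1,\ldots,2^k\}$. The seed case $k=0$ is immediate, and the case $k=1$ is checked by hand from Theorem \ref{T_1}. The inductive step will exploit the natural match between the two branches of (\ref{F_0}) (odd vs.\ even argument) and the two branches of the recurrence (\ref{F_00}) for $\sigma_{2^{k+1}}$ itself: the first line of (\ref{F_00}) produces only odd values $2\sigma_{2^k}(n)-1$ for indices $n\le 2^k$, while the second line produces only even values $2\sigma_{2^k}(2^k-n+1)$ for indices $2^k+n$. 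This parity split mirrors the definition (\ref{F_0}) exactly, so the verification decomposes cleanly.

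Assuming the induction hypothesis that $\sigma_{2^k}^{-1}$ inverts $\sigma_{2^k}$, I would treat an arbitrary $m\in\{1,\ldots,2^{k+1}\}$ in two cases. If $m=2n-1$ is odd, then (\ref{F_0}) gives $j:=\sigma_{2^{k+1}}^{-1}(m)=\sigma_{2^k}^{-1}(n)$, which lies in $\{1,\ldots,2^k\}$; applying the first branch of (\ref{F_00}) I get $\sigma_{2^{k+1}}(j)=2\sigma_{2^k}(\sigma_{2^k}^{-1}(n))-1=2n-1=m$. If $m=2n$ is even, then $j:=2^{k+1}+1-\sigma_{2^k}^{-1}(n)$ lies in $\{2^k+1,\ldots,2^{k+1}\}$, so writing $j=2^k+n'$ with $n'=2^k+1-\sigma_{2^k}^{-1}(n)$ and applying the second branch of (\ref{F_00}) yields
\begin{equation*}
\sigma_{2^{k+1}}(j)=2\sigma_{2^k}(2^k-n'+1)=2\sigma_{2^k}(\sigma_{2^k}^{-1}(n))=2n=m.
\end{equation*}

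Before concluding, I would briefly verify that the map defined by (\ref{F_0}) really is a bijection on $\{1,\ldots,2^{k+1}\}$: by the inductive hypothesis $\sigma_{2^k}^{-1}$ is a bijection of $\{1,\ldots,2^k\}$, so the odd inputs $2n-1$ ($1\le n\le 2^k$) are sent bijectively to $\{1,\ldots,2^k\}$, and the even inputs $2n$ are sent bijectively to $\{2^k+1,\ldots,2^{k+1}\}$ via the affine reflection $x\mapsto 2^{k+1}+1-x$. Hence the two branches partition both domain and codomain compatibly, so $\sigma_{2^{k+1}}^{-1}\sigma_{2^{k+1}}=\mathrm{id}$ as needed.

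The proof is essentially bookkeeping, and the only step requiring a moment of care is the algebraic simplification in the even case, where one must verify that $2^k-n'+1=\sigma_{2^k}^{-1}(n)$ after substituting $n'=2^k+1-\sigma_{2^k}^{-1}(n)$; this is the one spot where a sign or off-by-one error could derail the argument, so I would write out that cancellation explicitly rather than trusting it in passing.
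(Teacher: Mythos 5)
Your proof is correct and follows essentially the same route as the paper: induction on $k$ with a two-case split that matches the parity of the two branches of (\ref{F_0}) against the two branches of (\ref{F_00}). The only cosmetic difference is that you verify $\sigma_{2^{k+1}}\circ\sigma_{2^{k+1}}^{-1}=\mathrm{id}$ whereas the paper verifies $\sigma_{2^{k+1}}^{-1}\circ\sigma_{2^{k+1}}=\mathrm{id}$; on a finite set these are equivalent, and your additional check that (\ref{F_0}) defines a bijection makes your direction equally rigorous.
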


\begin{proof}

We will prove the proposition by induction on $k$.

\textbf{i)} ${\large \sigma }_{2^{1}}^{-1}\circ {\large \sigma }%
_{2^{1}}\left( n\right) =n$, $n=1,2$. The proof is straightforward.

\textbf{ii)} We assume by hypothesis of induction that ${\large \sigma }%
_{2^{k}}^{-1}$ is the inverse permutation of ${\large \sigma }_{2^{k}}$ and
we will prove that ${\large \sigma }_{2^{k+1}}^{-1}$ is the inverse of $%
{\large \sigma }_{2^{k+1}},$ that is,%
\begin{equation*}
\sigma _{2^{k+1}}^{-1}\circ \sigma _{2^{k+1}}\left( n\right) =\sigma
_{2^{k+1}}^{-1}\left( \sigma _{2^{k+1}}\left( n\right) \right) =n,\ \
n=1,\cdots ,2^{k},2^{k}+1,\cdots ,2^{k+1}
\end{equation*}%
We will distinguish two cases.

\textbf{a)} $n=1,\cdots ,2^{k}$

By theorem \ref{T_1} it yields%
\begin{equation*}
\sigma _{2^{k+1}}^{-1}\left( \sigma _{2^{k+1}}\left( n\right) \right)
=\sigma _{2^{k+1}}^{-1}\left( 2\sigma _{2^{k}}\left( n\right) -1\right)
\end{equation*}

By using $\left( \ref{F_0}\right) $ it results

\begin{equation*}
\sigma _{2^{k+1}}^{-1}\left( 2\sigma _{2^{k}}\left( n\right) -1\right)
=\sigma _{2^{k}}^{-1}\left( \sigma _{2^{k}}\left( n\right) \right) =n
\end{equation*}%
where the last step holds by the induction hypothesis.

\textbf{b)} $n=2^{k}+1,\cdots ,2^{k+1}$

We write $n=2^{k}+m$ with $m=1,\cdots ,2^{k}$, so
\begin{equation*}
\sigma _{2^{k+1}}^{-1}\left( \sigma _{2^{k+1}}\left( n\right) \right)
=\sigma _{2^{k+1}}^{-1}\left( \sigma _{2^{k+1}}\left( 2^{k}+m\right) \right)
\end{equation*}

By Theorem $\ref{T_1}$ it results%
\begin{equation*}
\sigma _{2^{k+1}}^{-1}\left( \sigma _{2^{k+1}}\left( 2^{k}+m\right) \right)
=\sigma _{2^{k+1}}^{-1}\left( 2\sigma _{2^{k}}\left( 2^{k}-m+1\right) \right)
\end{equation*}

Then by using $\left( \ref{F_0}\right) $ it yields%
\begin{eqnarray*}
\sigma _{2^{k+1}}^{-1}\left( 2\sigma _{2^{k}}\left( 2^{k}-m+1\right) \right)
&=&2^{k+1}+1-\sigma _{2^{k}}^{-1}\left( \sigma _{2^{k}}\left(
2^{k}-m+1\right) \right) = \\
&=&2^{k+1}+1-\left( 2^{k}-m+1\right) =2^{k}+m=n
\end{eqnarray*}%
where the hypothesis induction has been used.
\end{proof}

\begin{lemma}
\label{L_1}The permutations ${\large \sigma }_{2^{k}}^{-1}$ and ${\Large %
\sigma }_{2^{k-2}}^{-1}$ satisfy the recurrence relation%
\begin{equation*}
\sigma _{2^{k}}^{-1}\left( j\right) =\left\{
\begin{array}{c}
\sigma _{2^{k}}^{-1}\left( 4m\right) =2^{k-1}+\sigma _{2^{k-2}}^{-1}\left(
m\right) \\
\\
\sigma _{2^{k}}^{-1}\left( 4m-1\right) =2^{k-1}+1-\sigma
_{2^{k-2}}^{-1}\left( m\right) \\
\\
\sigma _{2^{k}}^{-1}\left( 4m-2\right) =2^{k}+1-\sigma _{2^{k-2}}^{-1}\left(
m\right) \\
\\
\sigma _{2^{k}}^{-1}\left( 4m-3\right) =\sigma _{2^{k-2}}^{-1}\left( m\right)%
\end{array}%
\begin{array}{c}
\text{with }1\leq m\leq 2^{k-2}\text{, }1\leq j\leq 2^{k} \\
k\geq 2%
\end{array}%
\right.
\end{equation*}
\end{lemma}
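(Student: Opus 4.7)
The plan is to iterate the recurrence of Proposition \ref{P_1} twice. Proposition \ref{P_1} expresses $\sigma_{2^{k}}^{-1}$ in terms of $\sigma_{2^{k-1}}^{-1}$ according to whether the argument is odd or even, and a second application will express $\sigma_{2^{k-1}}^{-1}$ in terms of $\sigma_{2^{k-2}}^{-1}$. Since $j$ ranges over $1,\dots,2^{k}$ and the four residues $4m-3,\,4m-2,\,4m-1,\,4m$ exhaust those values as $m$ ranges over $1,\dots,2^{k-2}$, treating each residue separately will produce the four claimed identities.

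More concretely, for fixed $m$ with $1\le m\le 2^{k-2}$, I would first rewrite each index in the form $2n-1$ or $2n$ to match the hypotheses of Proposition \ref{P_1}: $4m-3=2(2m-1)-1$, $4m-2=2(2m-1)$, $4m-1=2(2m)-1$, and $4m=2(2m)$. Applying Proposition \ref{P_1} at level $k$ then gives the four values in terms of $\sigma_{2^{k-1}}^{-1}(2m-1)$ and $\sigma_{2^{k-1}}^{-1}(2m)$. A second application of Proposition \ref{P_1}, this time at level $k-1$, yields $\sigma_{2^{k-1}}^{-1}(2m-1)=\sigma_{2^{k-2}}^{-1}(m)$ and $\sigma_{2^{k-1}}^{-1}(2m)=2^{k-1}+1-\sigma_{2^{k-2}}^{-1}(m)$. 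Substituting and simplifying, the cases $4m-3$ and $4m-2$ are immediate, while $4m-1$ and $4m$ require the small arithmetic simplification $2^{k}+1-(2^{k-1}+1-x)=2^{k-1}+x$ to match the stated formulas.

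The argument is therefore essentially an unfolding of the recurrence, so no induction is needed beyond that already carried out in Proposition \ref{P_1}. The only point that requires a bit of care is that the second application of the recurrence is legal, i.e., that $2m-1$ and $2m$ lie in the range $1\le n\le 2^{k-1}$ demanded by Proposition \ref{P_1}; this follows immediately from $1\le m\le 2^{k-2}$ and from the hypothesis $k\ge 2$ that guarantees that level $k-2$ is a legitimate index for the permutation. The main (and really only) obstacle is keeping track of signs and of the constants $2^{k-1}$ versus $2^{k}$ when substituting; once the four cases are written side by side, the verification is mechanical.
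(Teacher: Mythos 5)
Your proposal is correct and follows essentially the same route as the paper: both proofs unfold the recurrence of Proposition \ref{P_1} twice, splitting the index $j$ into the four residue classes $4m-3$, $4m-2$, $4m-1$, $4m$ (the paper organizes this as odd/even first and then sub-splits, which is the same computation). The only nitpick is that of your last two cases only $4m$ actually needs the simplification $2^{k}+1-(2^{k-1}+1-x)=2^{k-1}+x$; the $4m-1$ case follows directly from the second application of the recurrence.
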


\begin{proof}

\textbf{i)} According to (\ref{F_0}), $\sigma _{2^{k}}^{-1}\left( 2n-1\right) =\sigma
_{2^{k-1}}^{-1}\left( n\right) $ \ $1\leq n\leq 2^{k-1}$

By using (\ref{F_0}) again we get%
\begin{equation*}
\sigma _{2^{k}}^{-1}\left( 2n-1\right) =\sigma _{2^{k-1}}^{-1}\left(
n\right) =\left\{
\begin{array}{c}
\sigma _{2^{k-1}}^{-1}\left( 2m-1\right) =\sigma _{2^{k-2}}^{-1}\left(
m\right) \text{ \ \ if\ \ \ }n=2m-1 \\
\\
\sigma _{2^{k-1}}^{-1}\left( 2m\right) =2^{k-1}+1-\sigma
_{2^{k-2}}^{-1}\left( m\right) \text{ \ \ if\ \ \ }n=2m%
\end{array}%
k\geq 2\right.
\end{equation*}%
that yields, after substituting $n=2m-1$ and $n=2m$,%
\begin{equation*}
\begin{array}{cc}
\sigma _{2^{k}}^{-1}\left( 4m-3\right) =\sigma _{2^{k-2}}^{-1}\left( m\right)
&  \\
& 1\leq m\leq 2^{k-2},\text{ \ }k\geq 2 \\
\sigma _{2^{k}}^{-1}\left( 4m-1\right) =2^{k-1}+1-\sigma
_{2^{k-2}}^{-1}\left( m\right) &
\end{array}%
\end{equation*}

\textbf{ii)} According to (\ref{F_0}), $\sigma _{2^{k}}^{-1}\left( 2n\right) =2^{k}+1-\sigma
_{2^{k-1}}^{-1}\left( n\right) $ \ $1\leq n\leq 2^{k-1}$

We proceed as before to obtain%
\begin{equation*}
\sigma _{2^{k}}^{-1}\left( 2n\right) =\left\{
\begin{array}{c}
2^{k}+1-\sigma _{2^{k-1}}^{-1}\left( 2m-1\right) =2^{k}+1-\sigma
_{2^{k-2}}^{-1}\left( m\right) \text{ \ \ if \ \ }n=2m-1 \\
\\
2^{k}+1-\sigma _{2^{k-1}}^{-1}\left( 2m\right) =2^{k}-2^{k-1}+\sigma
_{2^{k-2}}^{-1}\left( m\right) \text{ \ \ if \ \ }n=2m%
\end{array}%
k\geq 2\right.
\end{equation*}%
and therefore%
\begin{equation*}
\begin{array}{cc}
\sigma _{2^{k}}^{-1}\left( 4m-2\right) =2^{k}+1-\sigma _{2^{k-2}}^{-1}\left(
m\right) &  \\
& 1\leq m\leq 2^{k-2},\text{ \ }k\geq 2 \\
\sigma _{2^{k}}^{-1}\left( 4m\right) =2^{k-1}+\sigma _{2^{k-2}}^{-1}\left(
m\right) &
\end{array}%
\end{equation*}%
which proves the lemma.
\end{proof}

\begin{remark}
\label{R_1}Given the $2^{k}-$periodic orbit, the inverse permutation%
\begin{equation}
\sigma _{2^{k}}^{-1}=%
\begin{pmatrix}
1 & 2 & \cdots  & n & \cdots  & 2^{k} \\
\sigma _{2^{k}}^{-1}\left( 1\right)  & \sigma _{2^{k}}^{-1}\left( 2\right)
&  & \sigma _{2^{k}}^{-1}\left( n\right)  &  & \sigma _{2^{k}}^{-1}\left(
2^{k}\right)
\end{pmatrix}
\label{F_18}
\end{equation}%
indicates that we reach the position $\sigma _{2^{k}}^{-1}\left( n\right) $ after
$n$ $\left( n=1,\cdots ,2^{k}\right) $ iterates from $C$ (see
\cite{SMARTIN09}). So, the visiting order of the cardinals of the orbit is given by the sequence
\begin{equation}
\sigma _{2^{k}}^{-1}\left( 1\right) \rightarrow \sigma _{2^{k}}^{-1}\left(
2\right) \rightarrow \cdots \rightarrow \sigma _{2^{k}}^{-1}\left(
2^{k}\right)  \label{F_19}
\end{equation}
\end{remark}

\bigskip That idea is reflected in the permutation defined as%
\begin{equation*}
\begin{pmatrix}
\sigma _{2^{k}}^{-1}\left( n\right)  \\
\sigma _{2^{k}}^{-1}\left( n+1\right)
\end{pmatrix}%
\text{ with }n=1,\cdots ,2^{k}
\end{equation*}

Because of this, we introduce the following definition.

\begin{definition}
\label{D_1}Let $\sigma _{n}=%
\begin{pmatrix}
1 & 2 & \cdots & n \\
\sigma _{n}\left( 1\right) & \sigma _{n}\left( 2\right) & \cdots & \sigma
_{n}\left( n\right)%
\end{pmatrix}%
$ be a permutation, we define the disordered permutation of $\sigma _{n}$,
denoted as $\Omega \left( \sigma _{n}\right) $, as the permutation%
\begin{equation*}
\Omega \left( \sigma _{n}\right) \equiv
\begin{pmatrix}
\sigma _{n}\left( 1\right) & \sigma _{n}\left( 2\right) & \cdots & \sigma
_{n}\left( n\right) \\
\sigma _{n}\left( 2\right) & \sigma _{n}\left( 3\right) & \cdots & \sigma
_{n}\left( 1\right)%
\end{pmatrix}%
\end{equation*}%
or using compact notation%
\begin{equation*}
\Omega \left( \sigma _{n}\right) =%
\begin{pmatrix}
\sigma _{n}\left( r\right) \\
\sigma _{n}\left( r+1\right)%
\end{pmatrix}%
\text{ with }r=1,\cdots ,n
\end{equation*}%
It is assumed that $\sigma _{n}\left( n+1\right) =\sigma
_{n}\left( 1\right)$.
\end{definition}

In particular the disordered permutation associated with ${\large \sigma }_{2^{k}}^{-1}$ is%
\begin{equation}
\Omega \left( \sigma _{2^{k}}^{-1}\right) \equiv
\begin{pmatrix}
\sigma _{2^{k}}^{-1}\left( n\right)  \\
\sigma _{2^{k}}^{-1}\left( n+1\right)
\end{pmatrix}%
\text{ with }n=1,\cdots ,2^{k}  \label{F_10}
\end{equation}%
It is assumed that $\sigma _{2^{k}}^{-1}\left( 2^{k}+1\right)
=\sigma _{2^{k}}^{-1}\left( 1\right)$.

This permutation gives what point of the $2^{k}$ - periodic orbit is
reached after one iteration from a fixed point, that is, if we are located
at the point with position $\sigma _{2^{k}}^{-1}\left( n\right) $ after one
iteration we will reach the point with position $\sigma _{2^{k}}^{-1}\left(
n+1\right) $. In other words, we move from $C_{\left( \sigma
_{2^{k}}^{-1}\left( n\right) ,2^{k}\right) }^{\ast }$ to\textbf{\ }$%
C_{\left( \sigma _{2^{k}}^{-1}\left( n+1\right) ,2^{k}\right) }^{\ast }$

As the inverse permutation of $\Omega \left( \sigma
_{n}\right)$ is
\begin{equation*}
\Omega ^{-1}\left( \sigma _{n}\right) =%
\begin{pmatrix}
\sigma _{n}\left( r+1\right)  \\
\sigma _{n}\left( r\right)
\end{pmatrix}%
\text{ with }r=1,\cdots ,n
\end{equation*}%
it results
\begin{equation}
\Omega ^{-1}\left( \sigma _{2^{k}}^{-1}\right) =%
\begin{pmatrix}
\sigma _{2^{k}}^{-1}\left( n+1\right)  \\
\sigma _{2^{k}}^{-1}\left( n\right)
\end{pmatrix}%
\text{ with }n=1,\cdots ,2^{k}  \label{F_20}
\end{equation}

\section{Matricial representation of period doubling cascade}

\subsection{Matrix associated with the disordered permutation $\Omega^{-1}\left( \protect\sigma _{2^{k}}^{-1}\right)$}

Every permutation is univocally associated with its permutation matrix. So,
one could think that matricial representation of period doubling cascade is
given by the permutation matrix of $\sigma _{2^{k}}^{-1}$. However the
permutation $\sigma _{2^{k}}^{-1}$ does not give the visiting order of the $%
2^{k}-$ periodic orbit but $\Omega \left( \sigma _{2^{k}}^{-1}\right) $ $%
\left( \text{see remark \ref{R_1}}\right) $; that is, why we have built the
permutation $\Omega \left( \sigma _{2^{k}}^{-1}\right) $.\ If $P_{\Omega
\left( \sigma _{n}^{-1}\right) }$\ is the permutation matrix associated with
$\Omega \left( \sigma _{2^{k}}^{-1}\right) $ it results

\begin{equation}
\begin{pmatrix}
0 & \cdots  & 0 & 1_{i} & 0 & \cdots  & 0%
\end{pmatrix}%
P_{\Omega \left( \sigma _{n}^{-1}\right) }=%
\begin{pmatrix}
0 & \cdots  & 0 & 0 & 1_{j} & \cdots  & 0%
\end{pmatrix}
\label{F_26}
\end{equation}%
that is, a particle located at the arbitrary position\textbf{\ }$i=\sigma
_{2^{k}}^{-1}\left( n\right) ,$ $n=1,2,3,\cdots ,2^{k}$, reaches the
position $j=\sigma _{2^{k}}^{-1}\left( n+1\right) ,$ $n=1,2,3,\cdots ,2^{k}$
after one iteration.

Because we usually work with column vectors instead of row vectors, we rewrite $\left( \ref{F_26}\right)$ as

\begin{equation*}
P_{\Omega \left( \sigma _{2^{k}}^{-1}\right) }^{t}%
\begin{pmatrix}
0 \\
\vdots \\
0 \\
1_{i} \\
0 \\
\vdots \\
0%
\end{pmatrix}%
=%
\begin{pmatrix}
0 \\
\vdots \\
0 \\
0 \\
1_{j} \\
\vdots \\
0%
\end{pmatrix}%
\end{equation*}

As permutation matrices are orthogonal matrices, it results
\begin{equation*}
P_{\Omega \left( \sigma _{2^{k}}^{-1}\right) }^{t}=P_{\Omega \left( \sigma
_{2^{k}}^{-1}\right) }^{-1}=P_{\Omega ^{-1}\left( \sigma
_{2^{k}}^{-1}\right) }
\end{equation*}

That is why we shall work with $\Omega ^{-1}\left( \sigma_{2^{k}}^{-1}\right) $
instead of $\Omega \left( \sigma _{2^{k}}^{-1}\right)$.

\begin{definition}
\label{D_2}We define the matrix $P_{\Omega \left( \sigma_{n}\right) }$
as the permutation matrix associated with the disordered
permutation $\Omega \left( \sigma _{n}\right) $ , that is,%
\begin{equation*}
P_{\Omega \left( \sigma _{n}\right) }=\left( p_{ij}\right) \text{ \ \ with \
\ }p_{ij}=\left\{
\begin{array}{cl}
1 & \text{ if }i=\sigma _{n}\left( r\right) \text{ and }j=\sigma _{n}\left(
r+1\right) \text{, }r=1,2,\cdots ,n \\
0 & \text{ otherwise}%
\end{array}%
\right.
\end{equation*}%
It is assumed that $\sigma _{n}\left( n+1\right) =\sigma_{n}\left( 1\right)$.
\end{definition}

Therefore from definition $\ref{D_2}$ and $\left( \ref{F_20}\right) $ it
follows $P_{\Omega ^{-1}\left( \sigma _{2^{k}}^{-1}\right)
}=\left( p_{ij}\right) $ with
\begin{equation}
p_{ij}=\left\{
\begin{array}{cl}
1 & \text{ if }i=\sigma _{2^{k}}^{-1}\left( n+1\right) \text{ and }j=\sigma
_{2^{k}}^{-1}\left( n\right) \text{, }n=1,2,3,\cdots ,2^{k} \\
0 & \text{ otherwise}%
\end{array}%
\right.  \label{F_14}
\end{equation}%
It is assumed that $\sigma _{2^{k}}^{-1}\left( 2^{k}+1\right)=\sigma _{2^{k}}^{-1}\left( 1\right)$.

Without loss of generality $\left( \ref{F_14}\right) $ can be written as
$P_{\Omega ^{-1}\left( \sigma _{2^{k}}^{-1}\right) }=\left( p_{ij}\right) $
with

\begin{equation}
p_{ij}=\left\{
\begin{array}{cl}
1 & \text{ if }i=\sigma _{2^{k}}^{-1}\left( n\right) \text{ and }j=\sigma
_{2^{k}}^{-1}\left( n-1\right) \text{, }n=1,2,3,\cdots ,2^{k} \\
0 & \text{ otherwise }%
\end{array}%
\right.  \label{F_15}
\end{equation}%
It is assumed that $\sigma _{2^{k}}^{-1}\left( 2^{k}\right)=\sigma _{2^{k}}^{-1}\left( 0\right)$.

We want to find the explicit expression of the permutation matrix
associated with the disordered permutation $\Omega ^{-1}\left( \sigma
_{2^{k}}^{-1}\right)$. It will provide us the matricial
representation of the period doubling cascade.

\subsection{Matricial representation}

The matricial representation we are looking for, is enclosed in the
following definition.

\begin{definition}
\label{D_3}The matrices $A_{2^{k}}$, $k\in
\mathbb{N}
$, are defined by%
\begin{equation*}
A_{2^{1}}=%
\begin{pmatrix}
0 & 1 \\
1 & 0%
\end{pmatrix}%
,A_{2^{2}}=%
\begin{pmatrix}
0 & 0 & \mathbf{1} & 0 \\
0 & 0 & 0 & \mathbf{1} \\
0 & \mathbf{1} & 0 & 0 \\
\mathbf{1} & 0 & 0 & 0%
\end{pmatrix}%
,A_{2^{k}}=\left(
\begin{tabular}{c|c|c}
$0$ & $A_{2^{k-2}}$ & $0$ \\ \hline
$0$ & $0$ & $I_{2^{k-2}}$ \\ \hline
$I_{2^{k-1}}^{\ast }$ & $0$ & $0$%
\end{tabular}%
\right) k\geqslant 3,
\end{equation*}%
where $I_{2^{k-2}}$ is the $2^{k-2}$ order identity matrix and $%
I_{2^{k-1}}^{\ast }=\left( a_{ij}\right) $ is the $2^{k-1}$ order matrix with $a_{ij}=\left\{
\begin{array}{cl}
1 & \text{if }i+j=2^{k-1}+1 \\
0 & \text{otherwise}%
\end{array}%
i,j=1,2,\cdots ,2^{k-1}\right. $, that is, $I_{2^{k-1}}^{\ast }=%
\begin{pmatrix}
&  &  & \mathbf{1} \\
& ^{2^{k-1)}} & \cdot &  \\
& \cdot &  &  \\
\mathbf{1} &  &  &
\end{pmatrix}%
$
\end{definition}

To prove that matrices $A_{2^{k}}$, given in the definition above, are the
looked for matricial representation we divide the proof into a sequence of
lemmas.

\begin{lemma}
\label{L_2}$P_{\Omega ^{-1}\left( \sigma _{2^{k}}^{-1}\right) }$ is a $2^{k}$
order matrix whose entries are all zero except three non-null blocks
denoted by $B_{2^{k-2}},$ $C_{2^{k-2}}$ and 
$C_{2^{k-1}}^{\ast }$ given by $B_{2^{k-2}}=\left( b_{ij}\right) $ with%
\begin{equation*}
b_{ij}=\left\{
\begin{array}{cl}
1 & \text{ if }i=\sigma _{2^{k}}^{-1}\left( 4m-3\right) \text{ \ and \ }%
j=\sigma _{2^{k}}^{-1}\left( 4m-4\right) \text{, }m=1,2,3,\cdots ,2^{k-2} \\
0 & \text{ otherwise}%
\end{array}%
\right.
\end{equation*}

$C_{2^{k-2}}=\left( c_{ij}\right) $ with
\begin{equation*}
c_{ij}=\left\{
\begin{array}{cl}
1 & \text{ if }i=\sigma _{2^{k}}^{-1}\left( 4m-1\right) \text{ \ and \ }%
j=\sigma _{2^{k}}^{-1}\left( 4m-2\right) \text{, }m=1,2,3,\cdots ,2^{k-2} \\
0 & \text{ otherwise}%
\end{array}%
\right.
\end{equation*}

$C_{2^{k-1}}^{\ast }=\left( c_{ij}^{\ast }\right) $ with %
\begin{equation*}
c_{ij}^{\ast }=\left\{
\begin{array}{cl}
1 & \text{ if }i=\sigma _{2^{k}}^{-1}\left( 2m\right) \text{ \ and \ }j=\sigma
_{2^{k}}^{-1}\left( 2m-1\right) \text{, }m=1,2,3,\cdots ,2^{k-1} \\
0 & \text{ otherwise}%
\end{array}%
\right.
\end{equation*}
\end{lemma}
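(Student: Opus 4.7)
The plan is to start from equation (\ref{F_15}), which enumerates the $2^k$ non-zero entries of $P_{\Omega^{-1}\left(\sigma_{2^{k}}^{-1}\right)}$ as pairs $\left(\sigma_{2^{k}}^{-1}(n),\,\sigma_{2^{k}}^{-1}(n-1)\right)$ parametrized by $n=1,\ldots,2^{k}$, with the wrap-around convention $\sigma_{2^{k}}^{-1}(0)=\sigma_{2^{k}}^{-1}(2^{k})$. The only real idea is to partition the parameter set $\{1,\ldots,2^{k}\}$ according to residue modulo $4$. Every $n$ can be written uniquely in one of the four forms $4m-3$, $4m-2$, $4m-1$, $4m$ with $m\in\{1,\ldots,2^{k-2}\}$, so the $2^k$ ones split into four disjoint classes of size $2^{k-2}$ each; the two even classes $\{4m-2\}$ and $\{4m\}$ together are precisely $\{2m':m'=1,\ldots,2^{k-1}\}$.

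Next, I would simply specialise (\ref{F_15}) to each class. Taking $n=4m-3$ yields $p_{ij}=1$ iff $i=\sigma_{2^{k}}^{-1}(4m-3)$ and $j=\sigma_{2^{k}}^{-1}(4m-4)$, which is verbatim the definition of $B_{2^{k-2}}$; taking $n=4m-1$ gives verbatim the definition of $C_{2^{k-2}}$; and reparametrising the even residues as $n=2m$ with $m=1,\ldots,2^{k-1}$ and substituting into (\ref{F_15}) reproduces exactly the definition of $C_{2^{k-1}}^{\ast}$. Thus each of the three blocks is, by construction, a collection of entries among the $2^{k}$ ones of $P_{\Omega^{-1}\left(\sigma_{2^{k}}^{-1}\right)}$.

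To finish, I would note that since $P_{\Omega^{-1}\left(\sigma_{2^{k}}^{-1}\right)}$ is a permutation matrix (one $1$ per row and per column), the three classes are automatically pairwise disjoint; and since the index sets $\{4m-3\}\cup\{4m-1\}\cup\{2m\}$ exhaust $\{1,\ldots,2^{k}\}$, every non-zero entry of $P_{\Omega^{-1}\left(\sigma_{2^{k}}^{-1}\right)}$ lies in exactly one of $B_{2^{k-2}}$, $C_{2^{k-2}}$, or $C_{2^{k-1}}^{\ast}$, and all remaining entries are zero.

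The subtle point, and the one where I would be most careful, is the wrap-around case $m=1$ inside $B_{2^{k-2}}$: the column index becomes $j=\sigma_{2^{k}}^{-1}(0)$, which by the convention in (\ref{F_15}) must be read as $\sigma_{2^{k}}^{-1}(2^{k})=\sigma_{2^{k}}^{-1}\!\left(4\cdot 2^{k-2}\right)$. Interpreting this value consistently with the other columns of $B_{2^{k-2}}$ is where Lemma~\ref{L_1} (applied with $m=2^{k-2}$) would enter; everything else is routine bookkeeping on residues. Beyond this wrap-around, Lemma~\ref{L_1} is not needed for the present statement, since the lemma merely locates the non-null entries and does not yet identify the blocks with $A_{2^{k-2}}$, $I_{2^{k-2}}$ and $I_{2^{k-1}}^{\ast}$ — that identification will be the content of the subsequent lemmas leading up to Definition~\ref{D_3}.
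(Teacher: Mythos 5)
Your proposal is correct and follows essentially the same route as the paper: both partition the index set $\{1,\ldots,2^{k}\}$ into the classes $\{4m-3\}$, $\{4m-1\}$ and $\{2m\}$ (your passage through residues mod $4$ and merging of the two even classes is only a cosmetic difference) and then specialise (\ref{F_15}) to each class to read off the three blocks. Your extra remarks on the wrap-around convention $\sigma_{2^{k}}^{-1}(0)=\sigma_{2^{k}}^{-1}(2^{k})$ and on the limited role of Lemma~\ref{L_1} here are consistent with the paper, which likewise does not invoke Lemma~\ref{L_1} in this proof.
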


\begin{proof}
Let be the set $S=\left\{ 1,2,3,\cdots ,2^{k}\right\}$. $S$ can
be divided intro three subsets%
\begin{equation}
\begin{array}{c}
S_{1}=\left\{ 4m-3\text{ }\left\vert \text{ }\right. m=1,2,3,\cdots
,2^{k-2}\right\} \\
S_{2}=\left\{ 4m-1\text{ }\left\vert \text{ }\right. m=1,2,3,\cdots
,2^{k-2}\right\} \\
S_{3}=\left\{ 2m\text{ }\left\vert \text{ }\right. m=1,2,3,\cdots
,2^{k-1}\right\}%
\end{array}
\label{F_3}
\end{equation}
such that $S=\underset{i=1}{\overset{3}{\cup }}S_{i}$, $S_{i}\cap
S_{j}=\varnothing $, $i\neq j$, $i,j=1,2,3$.

As (see (\ref{F_15})) $P_{\Omega ^{-1}\left( \sigma_{2^{k}}^{-1}\right) }=\left( p_{ij}\right)$ with
\begin{equation*}
p_{ij}=\left\{
\begin{array}{cl}
1 & \text{ if }i=\sigma _{2^{k}}^{-1}\left( n\right) \text{ \ and \ }j=\sigma
_{2^{k}}^{-1}\left( n-1\right) \text{, }n=1,2,3,\cdots ,2^{k} \\
0 & \text{ otherwise}%
\end{array}%
\right.
\end{equation*}
taking into account the partition of $S$ given by (\ref{F_3})
it follows, after substituting by $4m-3$, $4m-1$, and $2m$,
that $P_{\Omega^{-1}\left( \sigma _{2^{k}}^{-1}\right)}$ has three non - null blocks, denoted $B_{2^{k-2}}$,
$C_{2^{k-1}}^{\ast }$ and $C_{2^{k-2}}$ as outlined below.

\textbf{i) }$B_{2^{k-2}}=\left( b_{ij}\right) $ is a $2^{k-2}$ order
submatrix, with
\begin{equation}
b_{ij}=\left\{
\begin{array}{cl}
1 & \text{ if }i=\sigma _{2^{k}}^{-1}\left( 4m-3\right) \text{ \ and \ }%
j=\sigma _{2^{k}}^{-1}\left( 4m-4\right) \text{, }m=1,2,3,\cdots ,2^{k-2} \\
0 & \text{ otherwise}%
\end{array}%
\right.  \label{F_4}
\end{equation}

\textbf{ii) }$C_{2^{k-2}}=\left( c_{ij}\right) $ is a $2^{k-2}$ order
matrix, with
\begin{equation}
c_{ij}=\left\{
\begin{array}{cl}
1 & \text{ if }i=\sigma _{2^{k}}^{-1}\left( 4m-1\right) \text{ \ and \ }%
j=\sigma _{2^{k}}^{-1}\left( 4m-2\right) \text{, }m=1,2,3,\cdots ,2^{k-2} \\
0 & \text{ otherwise}%
\end{array}%
\right.  \label{F_5}
\end{equation}

\textbf{iii) }$C_{2^{k-1}}^{\ast }=\left( c_{ij}^{\ast }\right) $ is a $%
2^{k-1}$ order matrix, with
\begin{equation}
c_{ij}^{\ast }=\left\{
\begin{array}{cl}
1 & \text{ if }i=\sigma _{2^{k}}^{-1}\left( 2m\right) \text{ \ and \ }j=\sigma
_{2^{k}}^{-1}\left( 2m-1\right) \text{, }m=1,2,3,\cdots ,2^{k-1} \\
0 & \text{ otherwise}%
\end{array}%
\right.  \label{F_6}
\end{equation}
being zero the other entries in the matrix $P_{\Omega}^{-1}\left( \sigma _{2^{k}}^{-1}\right)$.
\end{proof}

The following lemma indicates where $B_{2^{k-2}},$ $C_{2^{k-2}}$
and $C_{2^{k-1}}^{\ast }$ are located in the matrix
$P_{\Omega _{\text{ }}^{-1}\left( \sigma _{2^{k}}^{-1}\right) }$.

\begin{lemma}
\label{L_3}Let $B_{2^{k-2}},$ $C_{2^{k-2}}$ and 
$C_{2^{k-1}}^{\ast }$ be the blocks introduced in lemma $\ref{L_2}$, then
\begin{equation*}
P_{\Omega ^{-1}\left( \sigma _{2^{k}}^{-1}\right) }=\left(
\begin{tabular}{c|l|c}
$0$ & $B_{2^{k-2}}$ & $0$ \\ \hline
$0$ & \multicolumn{1}{|c|}{$0$} & $C_{2^{k-2}}$ \\ \hline
$C_{2^{k-1}}^{\ast }$ & \multicolumn{1}{|c|}{$0$} & $0$%
\end{tabular}%
\right)
\end{equation*}
\end{lemma}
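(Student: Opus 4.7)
The plan is to start from Lemma \ref{L_2}, which already tells us which entries of $P_{\Omega^{-1}(\sigma_{2^k}^{-1})}$ are non-zero, and then pin down \emph{where} in the $2^k \times 2^k$ grid each of the three blocks $B_{2^{k-2}}$, $C_{2^{k-2}}$, $C^{\ast}_{2^{k-1}}$ actually lives. All that is needed is to compute the row index range (given by the first coordinate $i$) and the column index range (given by $j$) for each of the three blocks, using Lemma \ref{L_1} together with the fact that $\sigma_{2^{k-2}}^{-1}$ and $\sigma_{2^{k-1}}^{-1}$ are bijections of $\{1,\dots,2^{k-2}\}$ and $\{1,\dots,2^{k-1}\}$ respectively.

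For $B_{2^{k-2}}$, applying Lemma \ref{L_1} gives $i = \sigma_{2^k}^{-1}(4m-3) = \sigma_{2^{k-2}}^{-1}(m)$, so as $m$ runs through $1,\dots,2^{k-2}$ the row index $i$ sweeps out exactly $\{1,\dots,2^{k-2}\}$. For the column index $j = \sigma_{2^k}^{-1}(4m-4)$, I rewrite $4m-4 = 4(m-1)$; the mild subtlety is the wrap-around at $m=1$, where the convention $\sigma_{2^k}^{-1}(0) = \sigma_{2^k}^{-1}(2^k)$ identifies it with $\sigma_{2^k}^{-1}(4\cdot 2^{k-2}) = 2^{k-1} + \sigma_{2^{k-2}}^{-1}(2^{k-2})$. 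Thus, as $m$ varies, $j$ ranges over $2^{k-1} + \sigma_{2^{k-2}}^{-1}(m')$ for $m' \in \{1,\dots,2^{k-2}\}$, i.e.\ $j \in \{2^{k-1}+1,\dots,2^{k-1}+2^{k-2}\}$. This places $B_{2^{k-2}}$ in the top row-strip and middle column-strip, exactly the top-middle block of the claimed decomposition.

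An analogous computation handles the other two blocks. For $C_{2^{k-2}}$, Lemma \ref{L_1} yields $i = 2^{k-1}+1-\sigma_{2^{k-2}}^{-1}(m) \in \{2^{k-2}+1,\dots,2^{k-1}\}$ and $j = 2^k + 1 - \sigma_{2^{k-2}}^{-1}(m) \in \{2^{k-1}+2^{k-2}+1,\dots,2^k\}$, placing it in the middle row-strip and right column-strip. For $C^{\ast}_{2^{k-1}}$, the simpler recurrence (\ref{F_0}) gives $i = 2^k+1-\sigma_{2^{k-1}}^{-1}(m) \in \{2^{k-1}+1,\dots,2^k\}$ and $j = \sigma_{2^{k-1}}^{-1}(m) \in \{1,\dots,2^{k-1}\}$, placing it in the bottom row-strip and left column-strip.

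Since Lemma \ref{L_2} guarantees these are the only non-null entries, and the three pairs of ranges I have just computed fit exactly the top-middle, middle-right and bottom-left slots of a block decomposition with row partition $(2^{k-2},2^{k-2},2^{k-1})$ and column partition $(2^{k-1},2^{k-2},2^{k-2})$, the matrix has the stated form. The only genuinely delicate point is the $m=1$ wrap-around in $B_{2^{k-2}}$; everything else is a direct bookkeeping consequence of Lemma \ref{L_1} and the bijectivity of the involved permutations.
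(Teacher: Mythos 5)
Your proposal is correct and follows essentially the same route as the paper: read off the non-null entries from Lemma \ref{L_2}, substitute the formulas of Lemma \ref{L_1} (and recurrence (\ref{F_0}) for the $C^{\ast}_{2^{k-1}}$ block), and use bijectivity of $\sigma_{2^{k-2}}^{-1}$ and $\sigma_{2^{k-1}}^{-1}$ to identify the row and column strips occupied by each block. Your explicit treatment of the $m=1$ wrap-around via $\sigma_{2^{k}}^{-1}(0)=\sigma_{2^{k}}^{-1}(2^{k})$ is the same convention the paper invokes, just spelled out more carefully.
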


\begin{proof}

\textbf{i) }We want to prove that $B_{2^{k-2}}$ is the submatrix
of $P_{\Omega^{-1}\left( \sigma _{2^{k}}^{-1}\right) }$ formed
by rows $1,2,\cdots ,2^{k-2}$ and columns $2^{k-1}+1,2^{k-1}+2,\cdots
,2^{k-1}+2^{k-2}$.

According to (\ref{F_4}) the row-indexes of $B_{2^{k-2}}$ with
entry $1$ are given by $\sigma _{2^{k}}^{-1}\left( 4m-3\right) $ and by using
lemma \ref{L_1} we write
\begin{equation}
\sigma _{2^{k}}^{-1}\left( 4m-3\right) =\sigma _{2^{k-2}}^{-1}\left(
m\right) \ \text{\ with\ \ }m=1,2,3,4,\cdots ,2^{k-2},\text{ }k\geqslant 2
\label{F_7}
\end{equation}

As $\sigma _{2^{k-2}}^{-1}$ is a permutation it yields that $\sigma_{2^{k-2}}^{-1}\left( m\right) \in \left\{ 1,2,3,\cdots ,2^{k-2}\right\}$,
that is, the rows of submatrix $B_{2^{k-2}}$ with entry $1$ are located in the
$2^{k-2}$ first rows of $P_{\Omega ^{-1}\left( \sigma _{2^{k}}^{-1}\right)}$.

We now proceed in the same manner with the column-indexes that have entry
$1$ in the submatrix $B_{2^{k-2}}$.

By using (\ref{F_4}) and lemma \ref{L_1} it results that columns
with entry $1$ are given by
\begin{equation}
\sigma _{2^{k}}^{-1}\left( 4m-4\right) =2^{k-1}+\sigma _{2^{k-2}}^{-1}\left(
m-1\right) \text{ \ with \ }m=1,2,3,\cdots ,2^{k-2}  \label{F_8}
\end{equation}

It is assumed that $\sigma_{2^{k}}^{-1}\left( 0\right) =\sigma_{2^{k}}^{-1}\left( 2^{k}\right) $.

As $\sigma_{2^{k-2}}\left( m\right) \in \left\{ 1,2,3,\ldots,2^{k-2}\right\} $ it results%
\begin{equation*}
\sigma _{2^{k}}^{-1}\left( 4m-4\right) =2^{k-1}+\sigma _{2^{k-2}}^{-1}\left(
m-1\right) \in \left\{ 2^{k-1}+1,2^{k-1}+2,\cdots ,2^{k-1}+2^{k-2}\right\}
\end{equation*}%
or to put it another way, the indexes of the columns are given by%
\begin{equation*}
2^{k-1}+1,2^{k-1}+2,2^{k-1}+3,\cdots ,2^{k-1}+2^{k-2}
\end{equation*}

\textbf{ii) }We want to prove that the submatrix $C_{2^{k-2}}$ of%
$P_{\Omega _{\text{ }}^{-1}\left( \sigma _{2^{k}}^{-1}\right) }$ is formed by
rows $2^{k-2}+1,2^{k-2}+2,2^{k-2}+3,\cdots ,2^{k-1}$ and columns%
$2^{k-1}+2^{k-2}+1,2^{k-1}+2^{k-2}+2,2^{k-1}+2^{k-2}+3,\cdots ,2^{k}.$

According to (\ref{F_5}) and lemma \ref{L_1} it results that
row-indexes with entry $1$ are given by%
\begin{equation}
\begin{array}{cc}
\sigma _{2^{k}}^{-1}\left( 4m-1\right) =2^{k-1}+1-\sigma
_{2^{k-2}}^{-1}\left( m\right) & m=1,2,3,\cdots ,2^{k-2}%
\end{array}
\label{F_27}
\end{equation}%
therefore%
\begin{equation}
\sigma _{2^{k}}^{-1}\left( 4m-1\right) \in \left\{
2^{k-2}+1,2^{k-2}+2,2^{k-2}+3,\cdots ,2^{k-1}\right\}  \label{F_12}
\end{equation}

In the same way, according to (\ref{F_5}) and by using lemma
\ref{L_1}, the column-indexes with entry $1$ are given by%
\begin{equation}
\sigma _{2^{k}}^{-1}\left( 4m-2\right) =2^{k}+1-\sigma _{2^{k-2}}^{-1}\left(
m\right) \text{ \ }m=1,2,3,\cdots ,2^{k-2}  \label{F_28}
\end{equation}%
therefore%
\begin{equation}
\sigma _{2^{k}}^{-1}\left( 4m-2\right) \in \left\{
2^{k-1}+2^{k-2}+1,2^{k-1}+2^{k-2}+2,2^{k-1}+2^{k-2}+3,\cdots ,2^{k}\right\}
\label{F_13}
\end{equation}

It follows that the submatrix $C_{2^{k-2}}$ of
$P_{\Omega^{-1}\left( \sigma _{2^{k}}^{-1}\right) }$
is formed by rows $2^{k-2}+1,2^{k-2}+2,2^{k-2}+3,\cdots ,2^{k-1}$,
and columns $2^{k-1}+2^{k-2}+1,2^{k-1}+2^{k-2}+2,2^{k-1}+2^{k-2}+3,\cdots ,2^{k}$.

\textbf{iii) }We want to prove that $C_{2^{k-1}}^{\ast }$ is the
submatrix of $P_{\Omega^{-1}\left( \sigma _{2^{k}}^{-1}\right) }$
formed by rows $2^{k-1}+1,2^{k-1}+2,2^{k-1}+3,\cdots ,2^{k}$ and columns
$1,2,\cdots ,2^{k-1}$.

According to (\ref{F_6}) and by using (\ref{F_0}) it results
the row-indexes of submatrix with entry $1$ are given by
\begin{equation*}
\sigma _{2^{k}}^{-1}\left( 2m\right) =2^{k}+1-\sigma _{2^{k-1}}^{-1}\left(
m\right) \text{ \ with \ }m=1,2,3,\cdots ,2^{k-1}
\end{equation*}%
as%
\begin{equation*}
\sigma _{2^{k-1}}^{-1}\left( m\right) \in \left\{ 1,2,3,\cdots
,2^{k-1}\right\}
\end{equation*}%
it gives%
\begin{equation}
\sigma _{2^{k}}^{-1}\left( 2m\right) =2^{k}+1-\sigma _{2^{k-1}}\left(
m\right) \in \left\{ 2^{k-1}+1,2^{k-1}+2,2^{k-1}+3,\cdots ,2^{k}\right\}
\label{F_24}
\end{equation}

According to (\ref{F_6}) it results that the column-indexes of submatrix with entry $1$ are given by%
\begin{equation*}
\sigma _{2^{k}}^{-1}\left( 2m-1\right) \text{ , \ }m=1,2,3,\cdots ,2^{k-1}
\end{equation*}

By using (\ref{F_0}) it results%
\begin{equation*}
{\large \sigma }_{2^{k}}^{-1}\left( 2m-1\right) ={\large \sigma }%
_{2^{k-1}}^{-1}\left( m\right) \text{\ , \ }m=1,2,3,\cdots ,2^{k-1}
\end{equation*}
therefore%
\begin{equation}
{\large \sigma }_{2^{k}}^{-1}\left( 2m-1\right) \in \left\{ 1,2,3,\cdots,2^{k-1}\right\}  \label{F_25}
\end{equation}

It yields that the submatrix $C_{2^{k-1}}^{\ast }$ of $P_{\Omega^{-1}\left( \sigma _{2^{k}}^{-1}\right) }$ is formed by rows
$2^{k-1}+1,2^{k-1}+2,2^{k-1}+3,\cdots ,2^{k}$ and columns
$1,2,3,\cdots ,2^{k-1}$
\end{proof}

We are now in position to get the matricial representation in the following theorem.
\begin{theorem}
\label{T_2}Let $A_{2^{k}}$ the matrix referred to in definition
\ref{D_3}. $A_{2^{k}}$ is the permutation matrix associated with the permutation $\Omega^{-1}\left( \sigma _{2^{k}}^{-1}\right)$ 
$\forall k\in
\mathbb{N}
$, that is, $P_{\Omega ^{-1}\left( \sigma _{2^{k}}^{-1}\right) }=A_{2^{k}}$. We will say that $A_{2^{k}}$
is the matrix representation of period doubling cascade.
\end{theorem}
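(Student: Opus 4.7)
The plan is to prove this by induction on $k$, using Lemma~\ref{L_3} which has already identified the block structure of $P_{\Omega^{-1}\left(\sigma_{2^k}^{-1}\right)}$. The base cases $k=1$ and $k=2$ (needed since the recurrence in Definition~\ref{D_3} reaches back two steps) are verified by direct computation: from Proposition~\ref{P_1} one gets $\sigma_2^{-1}=(1,2)$ and $\sigma_4^{-1}=(1,4,2,3)$, whence $P_{\Omega^{-1}(\sigma_2^{-1})}=A_{2^1}$ and $P_{\Omega^{-1}(\sigma_4^{-1})}=A_{2^2}$ follow from formula~(\ref{F_15}).

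For the inductive step, assume $P_{\Omega^{-1}(\sigma_{2^{j}}^{-1})}=A_{2^{j}}$ for all $j<k$ with $k\geq 3$. By Lemma~\ref{L_3} it suffices to show that the three nonzero blocks of $P_{\Omega^{-1}(\sigma_{2^k}^{-1})}$ match the three nonzero blocks of $A_{2^k}$, namely $B_{2^{k-2}}=A_{2^{k-2}}$, $C_{2^{k-2}}=I_{2^{k-2}}$, and $C_{2^{k-1}}^{\ast}=I_{2^{k-1}}^{\ast}$. Each of these is obtained by translating the absolute row/column indices derived in the proof of Lemma~\ref{L_3} into block-relative indices, and applying Lemma~\ref{L_1}.

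For $B_{2^{k-2}}$: using $\sigma_{2^k}^{-1}(4m-3)=\sigma_{2^{k-2}}^{-1}(m)$ and (cyclically) $\sigma_{2^k}^{-1}(4m-4)=2^{k-1}+\sigma_{2^{k-2}}^{-1}(m-1)$, the local indices inside the block are $i'=\sigma_{2^{k-2}}^{-1}(m)$, $j'=\sigma_{2^{k-2}}^{-1}(m-1)$, for $m=1,\dots,2^{k-2}$. Comparing with~(\ref{F_15}), this is exactly the defining condition of $P_{\Omega^{-1}(\sigma_{2^{k-2}}^{-1})}$, so by the inductive hypothesis $B_{2^{k-2}}=A_{2^{k-2}}$. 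For $C_{2^{k-2}}$: subtracting the offsets $2^{k-2}$ (rows) and $2^{k-1}+2^{k-2}$ (columns) from the indices in~(\ref{F_27}) and~(\ref{F_28}) gives $i'=j'=2^{k-2}+1-\sigma_{2^{k-2}}^{-1}(m)$, so the block is the identity. For $C_{2^{k-1}}^{\ast}$: subtracting the row offset $2^{k-1}$ from~(\ref{F_24}) and using~(\ref{F_25}) yields relative indices satisfying $i'+j'=2^{k-1}+1$, which is precisely the defining condition of $I_{2^{k-1}}^{\ast}$ in Definition~\ref{D_3}.

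The main obstacle I expect is purely bookkeeping: keeping the cyclic convention $\sigma_{2^k}^{-1}(0)=\sigma_{2^k}^{-1}(2^k)$ consistent between the $B$-block formula and the statement of Lemma~\ref{L_1}, and ensuring that as $m$ runs over $\{1,\dots,2^{k-2}\}$ (resp.\ $\{1,\dots,2^{k-1}\}$) the surjectivity of $\sigma_{2^{k-2}}^{-1}$ (resp.\ $\sigma_{2^{k-1}}^{-1}$) populates every relative row/column of the block exactly once, so that the matched structure really is $A_{2^{k-2}}$, $I_{2^{k-2}}$, $I_{2^{k-1}}^{\ast}$ respectively rather than some proper submatrix. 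Once this is checked, the identification of blocks with $A_{2^k}$ via Definition~\ref{D_3} completes the induction.
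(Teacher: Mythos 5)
Your proposal is correct and follows essentially the same route as the paper: induction on $k$ with base cases verified directly, then using the block decomposition of Lemma \ref{L_3} together with Lemma \ref{L_1} to identify $B_{2^{k-2}}$ with $A_{2^{k-2}}$ via the induction hypothesis and to show the other two blocks reduce, after subtracting the row/column offsets, to $I_{2^{k-2}}$ and $I_{2^{k-1}}^{\ast}$. The index computations you sketch (in particular $i'=j'$ for the $C$-block and $i'+j'=2^{k-1}+1$ for the $C^{\ast}$-block) are exactly those in the paper's argument.
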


\begin{proof}
\linebreak The proof for $k=1$ and $k=2$ is left to the reader.
Let be $k\in
\mathbb{N}
$  $k\geq 3$. The theorem will be proven by induction on $k$.

\textbf{i)} Case $k=3$.

According to (\ref{F_15}), and definition \ref{D_3} the
matrix $P_{\Omega ^{-1}\left( \sigma _{2^{3}}^{-1}\right) }$ is%
\begin{equation*}
P_{\Omega ^{-1}\left( \sigma _{2^{3}}^{-1}\right) }=\left(
\begin{tabular}{llll|llll}
$0$ & $0$ & $0$ & $0$ & $0$ & $\mathbf{1}$ & \multicolumn{1}{|l}{$0$} & $0$
\\
$0$ & $0$ & $0$ & $0$ & $\mathbf{1}$ & $0$ & \multicolumn{1}{|l}{$0$} & $0$
\\ \cline{5-8}
$0$ & $0$ & $0$ & $0$ & $0$ & $0$ & \multicolumn{1}{|l}{$\mathbf{1}$} & $0$
\\
$0$ & $0$ & $0$ & $0$ & $0$ & $0$ & \multicolumn{1}{|l}{$0$} & $\mathbf{1}$
\\ \hline
$0$ & $0$ & $0$ & $\mathbf{1}$ & $0$ & $0$ & $0$ & $0$ \\
$0$ & $0$ & $\mathbf{1}$ & $0$ & $0$ & $0$ & $0$ & $0$ \\
$0$ & $\mathbf{1}$ & $0$ & $0$ & $0$ & $0$ & $0$ & $0$ \\
$\mathbf{1}$ & $0$ & $0$ & $0$ & $0$ & $0$ & $0$ & $0$%
\end{tabular}%
\right) =\left(
\begin{tabular}{c|c|c}
$0$ & $A_{2^{3-2}}$ & $0$ \\ \hline
$0$ & $0$ & $I_{2^{3-2}}$ \\ \hline
$I_{2^{3-1}}^{\ast }$ & $0$ & $0$%
\end{tabular}%
\right) =A_{2^{3}}
\end{equation*}%

\textbf{ii)} By hypothesis of induction we assume that $P_{\Omega^{-1}\left( \sigma _{2^{n}}^{-1}\right) }=A_{2^{n}}$,
$3\leq n<k$ and we have to prove that $P_{\Omega ^{-1}\left( \sigma _{2^{k}}^{-1}\right)}=A_{2^{k}}$.

According to lemma \ref{L_3} and definition \ref{D_3} to
prove that $P_{\Omega ^{-1}\left( \sigma _{2^{k}}^{-1}\right) }=A_{2^{k}}$
it is enough to prove that $B_{2^{k-2}}=A_{2^{k-2}}$,
$C_{2^{k-1}}^{\ast }=I_{2^{k-1}}^{\ast }$ and $C_{2^{k-2}}=I_{2^{k-2}}$.

\textbf{a) }We want to prove that the submatrices $B_{2^{k-2}}$
and $A_{2^{k-2}}$ are the same, that is, $B_{2^{k-2}}=A_{2^{k-2}}$.

According to (\ref{F_4}), $B_{2^{k-2}}=\left( b_{ij}\right) $ with
\begin{equation*}
b_{ij}=\left\{
\begin{array}{cl}
1 & \text{ if }i=\sigma _{2^{k}}^{-1}\left( 4m-3\right) \text{ \ and \ }%
j=\sigma _{2^{k}}^{-1}\left( 4m-4\right) \text{, }m=1,2,3,\cdots ,2^{k-2} \\
0 & \text{ otherwise}%
\end{array}%
\right.
\end{equation*}

By using (\ref{F_7}) and (\ref{F_8}) it results that
$\sigma _{2^{k}}^{-1}\left( 4m-3\right) =\sigma _{2^{k-2}}^{-1}\left(
m\right) $ and $\ \sigma _{2^{k}}^{-1}\left( 4m-4\right) =2^{k-1}+\sigma
_{2^{k-2}}^{-1}\left( m-1\right) $ $\ \ m=1,2,3,\cdots ,2^{k-2},$ $k\geq 2$.
Therefore it follows that%
\begin{equation*}
b_{ij}=\left\{
\begin{array}{cl}
1 & \text{ if }i=\sigma _{2^{k-2}}^{-1}\left( m\right) \text{ \ and \ }%
j=2^{k-1}+\sigma _{2^{k-2}}^{-1}\left( m-1\right) \text{, }m=1,2,3,\cdots
,2^{k-2} \\
0 & \text{ otherwise}%
\end{array}%
\right.
\end{equation*}

If we do not take into account the term $2^{k-1}$, which appears in the
column-index $"j"$ and represents a displacement of the columns of $B_{2^{k-2}}$
in the matrix $P_{\Omega _{\text{ }}^{-1}\left( \sigma_{2^{k}}^{-1}\right) }$,
then $B_{2^{k-2}}$ is the associated matrix with
$\Omega ^{-1}\left( \sigma _{2^{k-2}}^{-1}\right)$ (see (\ref{F_15})), that by hypothesis
of induction coincides with $A_{2^{k-2}},$ therefore $B_{2^{k-2}}=A_{2^{k-2}}$.

\textbf{2) }We want to prove that the submatrices $C_{2^{k-2}}$
and $I_{2^{k-2}}$ are the same, that is, $C_{2^{k-2}}=I_{2^{k-2}}$.

From (\ref{F_27}) and (\ref{F_28})
we know that row and column-indexes with entry $1$ in $C_{2^{k-2}}$
are respectively given by
\begin{equation*}
\sigma _{2^{k}}^{-1}\left( 4m-1\right) =2^{k-1}+1-\sigma
_{2^{k-2}}^{-1}\left( m\right) ,\text{ }m=1,2,3,\cdots ,2^{k-2}
\end{equation*}

\begin{equation*}
\sigma _{2^{k}}^{-1}\left( 4m-2\right) =2^{k}+1-\sigma _{2^{k-2}}^{-1}\left(m\right) \text{,\ }m=1,2,3,\cdots ,2^{k-2}
\end{equation*}

Therefore%
\begin{equation*}
\sigma _{2^{k}}^{-1}\left( 4m-1\right) -2^{k-2}=\sigma _{2^{k}}^{-1}\left(4m-2\right) -\left( 2^{k-1}+2^{k-2}\right) ,\text{ }m=1,2,\cdots ,2^{k-2}
\end{equation*}
So taking into account (\ref{F_12}) and (\ref{F_13}) it results
$C_{2^{k-2}}$ is the submatrix with entries $1$ in the diagonal, and displaced $2^{k-2}$ rows
and $2^{k-1}+2^{k-2}$ columns, that is,
$C_{2^{k-2}}=I_{2^{k-2}}$ displaced $2^{k-2}$ rows and $2^{k-1}+2^{k-2}$ columns in the matrix 
$P_{\Omega ^{-1}\left( \sigma _{2^{n}}^{-1}\right) }$.

\textbf{3) }We want to prove that the submatrices 
$C_{2^{k-1}}^{\ast }$ and $I_{2^{k-1}}^{\ast }$ are the same, that is,
$C_{2^{k-1}}^{\ast }=I_{2^{k-1}}^{\ast }$.

According to (\ref{F_6}) we have that
$C_{2^{k-1}}^{\ast }=\left( c_{ij}^{\ast }\right) $ is a $2^{k-1}$
order matrix with
\begin{equation*}
c_{ij}^{\ast }=\left\{
\begin{array}{cl}
1 & \text{ if }i=\sigma _{2^{k}}^{-1}\left( 2m\right) \text{ \ and \ }j=\sigma
_{2^{k}}^{-1}\left( 2m-1\right) \text{, }m=1,2,3,\cdots ,2^{k-1} \\
0 & \text{ otherwise}%
\end{array}%
\right.
\end{equation*}

By using (\ref{F_0}) it follows that%
\begin{equation}
c_{ij}^{\ast }=\left\{
\begin{array}{cl}
1 & \text{ if }i=2^{k}+1-\sigma _{2^{k-1}}^{-1}\left( m\right) \text{ and \ }%
j=\sigma _{2^{k-1}}^{-1}\left( m\right) \text{, }m=1,2,3,\cdots ,2^{k-1} \\
0 & \text{ otherwise}%
\end{array}%
\right.  \label{F_21}
\end{equation}

Let us notice that $i=2^{k-1}+1,2^{k-1}+2,\cdots ,2^{k-1}+2^{k-1}$
and $j=1,2,3,\cdots ,2^{k-1}$ (see (\ref{F_24}) and (\ref{F_25})),
that is, $C_{2^{k-1}}^{\ast }$ is a $2^{k-1}$ order submatrix displaced $2^{k-1}$
rows in the matrix $P_{\Omega ^{-1}\left( \sigma_{2^{n}}^{-1}\right) }$.

If we had not taken into account this displacement (\ref{F_21}) would be rewritten as%
\begin{equation*}
c_{ij}^{\ast }=\left\{
\begin{array}{cl}
1 & \text{ if }i=2^{k-1}+1-\sigma _{2^{k-1}}^{-1}\left( m\right) \text{ and \ }%
j=\sigma _{2^{k-1}}^{-1}\left( m\right) \text{, }m=1,2,3,\cdots ,2^{k-1} \\
0 & \text{ otherwise}%
\end{array}%
\right.
\end{equation*}%
and therefore
\begin{equation*}
i+j=2^{k-1}+1
\end{equation*}%
where $i$ and $j$ are respectively the row and column index
of entry $1$ in the not displaced $C_{2^{k-1}}^{\ast }$
where $i,j=1,2,\cdots ,2^{k-1}$. According to definition \ref{D_3} we have
$I_{2^{k-1}}^{\ast }=\left( a_{ij}\right)$ with
\begin{equation*}
a_{ij}=\left\{
\begin{array}{cl}
1 & \text{ \ if \ }i+j=2^{k-1}+1 \\
0 & \text{ otherwise}%
\end{array}%
i,j=1,2,\cdots ,2^{k-1}\right.
\end{equation*}

It follows that $C_{2^{k-1}}^{\ast }=I_{2^{k-1}}^{\ast }$ and is
displaced $2^{k-1}$ rows in the matrix
$P_{\Omega ^{-1}\left( \sigma _{2^{n}}^{-1}\right) }$.
\end{proof}

An equivalent formulation of theorem \ref{T_2} is given in the
following theorem where no recurrence relations between blocks are used to build the matrix.

\begin{theorem}
Let $I_{2^{k}}$ and $I_{2^{k}}^{\ast }$ the matrices introduced in definition \ref{D_3}.
The matrix representation of the period doubling cascade, $A_{2^{k}},$ $k\in
\mathbb{N}
$, is given by

\textbf{i)} $k$ is even:

\begin{equation}
A_{2^{k}}=\left(
\begin{tabular}{l|l}
&
\begin{tabular}{l|l}
\begin{tabular}{l|l}
&
\begin{tabular}{l|l}
\begin{tabular}{l|l}
& $I_{2^{1}}$ \\ \hline
$I_{2^{1}}^{\ast }$ &
\end{tabular}
&  \\ \hline
& $\ddots$ \\ \cline{2-2}
\end{tabular}
\\ \hline
\begin{tabular}{ll}
& $I_{2^{k-3}}^{\ast }$ \\
&
\end{tabular}
&  \\ \cline{1-1}
\end{tabular}
&  \\ \hline
&
\begin{tabular}{lll}
&  &  \\
& $I_{2^{k-2}}$ &  \\
&  &
\end{tabular}%
\end{tabular}
\\ \hline
\multicolumn{1}{c|}{$%
\begin{tabular}{lllllll}
&  &  &  &  &  &  \\
&  &  &  &  &  &  \\
&  &  &  &  &  &  \\
&  &  & $I_{2^{k-1}}^{\ast }$ &  &  &  \\
&  &  &  &  &  &  \\
&  &  &  &  &  &  \\
&  &  &  &  &  &
\end{tabular}%
$} &
\end{tabular}%
\right)  \label{F_16}
\end{equation}

\textbf{ii)} $k$ is odd

\begin{equation*}
A_{2^{1}}=\left(
\begin{tabular}{ll}
$0$ & $1$ \\
$1$ & $0$%
\end{tabular}%
\right)
\end{equation*}
and $k\geq 3$%
\begin{equation}
A_{2^{k}}=\left(
\begin{tabular}{l|l}
&
\begin{tabular}{l|l}
\begin{tabular}{l|l}
&
\begin{tabular}{l|l}
\begin{tabular}{l|l}
$I_{2^{1}}^{\ast }$ &  \\ \hline
& $I_{2^{1}}$%
\end{tabular}
&  \\ \hline
& $\mathbf{\ddots }$ \\ \cline{2-2}
\end{tabular}
\\ \hline
\begin{tabular}{ll}
& $I_{2^{k-3}}^{\ast }$ \\
&
\end{tabular}
&  \\ \cline{1-1}
\end{tabular}
&  \\ \hline
&
\begin{tabular}{lll}
&  &  \\
& $I_{2^{k-2}}$ &  \\
&  &
\end{tabular}%
\end{tabular}
\\ \hline
\multicolumn{1}{c|}{$%
\begin{tabular}{lllllll}
&  &  &  &  &  &  \\
&  &  &  &  &  &  \\
&  &  &  &  &  &  \\
&  &  & $I_{2^{k-1}}^{\ast }$ &  &  &  \\
&  &  &  &  &  &  \\
&  &  &  &  &  &  \\
&  &  &  &  &  &
\end{tabular}%
$} &
\end{tabular}%
\right)  \label{F_17}
\end{equation}
\end{theorem}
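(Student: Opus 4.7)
The plan is to establish both (\ref{F_16}) and (\ref{F_17}) by induction on $k$, maintaining the parity of $k$ throughout, since the recurrence in Definition \ref{D_3} decreases $k$ by $2$ and hence preserves parity. The even-$k$ and odd-$k$ inductions run in parallel with different base cases but identical inductive mechanics.

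For the base cases I would verify the claim directly from Definition \ref{D_3}. When $k=1$, $A_{2^{1}}=I_{2^{1}}^{*}$, which is exactly the innermost block of (\ref{F_17}) read for $k=1$. When $k=2$, the explicit matrix $A_{2^{2}}$ displayed in Definition \ref{D_3} decomposes as the $2\times 2$ block matrix $\left(\begin{smallmatrix} 0 & I_{2^{1}} \\ I_{2^{1}}^{*} & 0 \end{smallmatrix}\right)$, which is the innermost block of (\ref{F_16}) for $k=2$. When $k=3$, the computation in case i) of the proof of Theorem \ref{T_2} already exhibits $A_{2^{3}}$ in the form (\ref{F_17}).

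For the inductive step with $k\ge 4$ (even) or $k\ge 5$ (odd) I would apply the recurrence of Definition \ref{D_3},
\begin{equation*}
A_{2^{k}} = \left(\begin{array}{c|c|c} 0 & A_{2^{k-2}} & 0 \\ \hline 0 & 0 & I_{2^{k-2}} \\ \hline I_{2^{k-1}}^{*} & 0 & 0 \end{array}\right),
\end{equation*}
and substitute the inductive hypothesis for $A_{2^{k-2}}$, which has the same parity as $k$ and therefore obeys the same target formula (\ref{F_16}) or (\ref{F_17}). Plugging the nested block form of $A_{2^{k-2}}$ into the top-middle slot adds exactly one new outer layer: a fresh $I_{2^{k-2}}$ in rows $2^{k-2}+1,\ldots,2^{k-1}$ and columns $2^{k-1}+2^{k-2}+1,\ldots,2^{k}$, and a fresh $I_{2^{k-1}}^{*}$ in rows $2^{k-1}+1,\ldots,2^{k}$ and columns $1,\ldots,2^{k-1}$, while the previous innermost seed block (either $A_{2^{1}}=I_{2^{1}}^{*}$ or the $2\times 2$ block containing $I_{2^{1}}$ and $I_{2^{1}}^{*}$) is pushed further toward the upper-right but left structurally unchanged. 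This is exactly the new outer frame visible in (\ref{F_16}) and (\ref{F_17}).

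The main obstacle is purely bookkeeping: I would need to confirm that the absolute row and column offsets accumulated across all the nesting levels place each $I_{2^{k-2j}}$ and $I_{2^{k-(2j+1)}}^{*}$ precisely where (\ref{F_16}) and (\ref{F_17}) indicate. This reduces to the observation that the top-middle slot of the $2^{j}\times 2^{j}$ recurrence always sits in rows $1,\ldots,2^{j-2}$ and columns $2^{j-1}+1,\ldots,2^{j-1}+2^{j-2}$, so after $\ell$ nesting steps the cumulative row-offset is $0$ and the cumulative column-offset telescopes to $\sum_{j}(2^{k-2j-1}+2^{k-2j-2})$. Once this index arithmetic is verified, the diagonal staircase of $I^{*}$-blocks along the lower-left and the complementary staircase of $I$-blocks in the upper-right appear automatically from the iterated recurrence, completing the proof.
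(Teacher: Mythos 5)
Your proposal is correct and follows essentially the same route as the paper: the paper also proves (\ref{F_16}) and (\ref{F_17}) by separate parity-preserving inductions (writing $k=2m$ and $k=2m-1$ and inducting on $m$), verifies the base cases $A_{2^{2}}$ and $A_{2^{3}}$ directly, and in the inductive step substitutes the hypothesized nested form of $A_{2^{k-2}}$ into the three-block recurrence to peel off the new outer $I_{2^{k-2}}$ and $I_{2^{k-1}}^{\ast }$ frame. The only cosmetic difference is that you invoke the recurrence directly from Definition \ref{D_3} while the paper cites it via Theorem \ref{T_2}; the content is identical.
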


\begin{proof}

\textbf{i)} Let $k$ be even, so we write $k=2m$, $m\in
\mathbb{N}
$ and the theorem is proven by induction on $m$.

\textbf{a)} For $m=1,$ we have
\begin{equation*}
A_{2^{2m}}=A_{2^{2}}=\left(
\begin{tabular}{ll|ll}
&  & $1$ &  \\
&  &  & $1$ \\ \hline
& $1$ &  &  \\
$1$ &  &  &
\end{tabular}%
\right) =\left(
\begin{tabular}{l|l}
& $I_{2^{1}}$ \\ \hline
$I_{2^{1}}^{\ast }$ &
\end{tabular}%
\right)
\end{equation*}

\textbf{b)} By hypothesis of induction we assume that $A_{2^{2n}}$ given by
(\ref{F_16}) is true for $1\leqslant n<m$ and we have to prove
that is true for $n=m$.

On one side, by theorem \ref{T_2} we have
\begin{equation}
A_{2^{2m}}=\left(
\begin{tabular}{c|c|c}
$0$ & $A_{2^{2m-2}}$ & $0$ \\ \hline
$0$ & $0$ & $I_{2^{2m-2}}$ \\ \hline
$I_{2^{2m-1}}^{\ast }$ & $0$ & $0$%
\end{tabular}%
\right) =\left(
\begin{tabular}{c|c|c}
$0$ & $A_{2^{2\left( m-1\right) }}$ & $0$ \\ \hline
$0$ & $0$ & $I_{2^{2m-2}}$ \\ \hline
$I_{2^{2m-1}}^{\ast }$ & $0$ & $0$%
\end{tabular}%
\right)   \label{F_22}
\end{equation}%
On the other side, as $2\left( m-1\right) $ is even and $\left(m-1\right) <m$,
by using the hypothesis of induction it results
\begin{equation}
A_{2^{2\left( m-1\right) }}=\left(
\begin{tabular}{c|c|c}
& $%
\begin{tabular}{l|l}
& $I_{2^{1}}$ \\ \hline
$I_{2^{1}}^{\ast }$ &
\end{tabular}%
$ &  \\ \hline
&  & $\mathbf{\ddots }$ \\ \hline
\begin{tabular}{ll}
&  \\
$I_{2^{2m-3}}^{\ast }$ &
\end{tabular}
&  &
\end{tabular}%
\right)   \label{F_23}
\end{equation}%
Then, after substituting $\left( \ref{F_23}\right)$ in $\left( \ref{F_22}\right) $ it yields

$A_{2^{2m}}=\left(
\begin{tabular}{l|l}
&
\begin{tabular}{l|l}
\begin{tabular}{l|l}
&
\begin{tabular}{l|l}
\begin{tabular}{l|l}
& $I_{2^{1}}$ \\ \hline
$I_{2^{1}}^{\ast }$ &
\end{tabular}
&  \\ \hline
& $\mathbf{\ddots }$ \\ \cline{2-2}
\end{tabular}
\\ \hline
\begin{tabular}{ll}
& $I_{2^{2m-3}}^{\ast }$ \\
&
\end{tabular}
&  \\ \cline{1-1}
\end{tabular}
&  \\ \hline
&
\begin{tabular}{ll}
&  \\
& $I_{2^{2m-2}}$ \\
&
\end{tabular}%
\end{tabular}
\\ \hline
\multicolumn{1}{c|}{$%
\begin{tabular}{lllllll}
&  &  &  &  &  &  \\
&  &  &  &  &  &  \\
&  &  &  &  &  &  \\
&  &  & $I_{2^{2m-1}}^{\ast }$ &  &  &  \\
&  &  &  &  &  &  \\
&  &  &  &  &  &  \\
&  &  &  &  &  &
\end{tabular}%
$} &
\end{tabular}%
\right) $

and as $k=2m$ it finally results
\begin{equation*}
A_{2^{k}}=\left(
\begin{tabular}{l|l}
&
\begin{tabular}{l|l}
\begin{tabular}{l|l}
&
\begin{tabular}{l|l}
\begin{tabular}{l|l}
& $I_{2^{1}}$ \\ \hline
$I_{2^{1}}^{\ast }$ &
\end{tabular}
&  \\ \hline
& $\mathbf{\ddots }$ \\ \cline{2-2}
\end{tabular}
\\ \hline
\begin{tabular}{ll}
& $I_{2^{k-3}}^{\ast }$ \\
&
\end{tabular}
&  \\ \cline{1-1}
\end{tabular}
&  \\ \hline
&
\begin{tabular}{llll}
&  &  &  \\
&  &  &  \\
&  & $I_{2^{k-2}}$ &  \\
&  &  &
\end{tabular}%
\end{tabular}
\\ \hline
\multicolumn{1}{c|}{$%
\begin{tabular}{lllllll}
&  &  &  &  &  &  \\
&  &  &  &  &  &  \\
&  &  &  &  &  &  \\
&  &  & $I_{2^{k-1}}^{\ast }$ &  &  &  \\
&  &  &  &  &  &  \\
&  &  &  &  &  &  \\
&  &  &  &  &  &
\end{tabular}%
$} &
\end{tabular}%
\right)
\end{equation*}

\textbf{ii)} Let $k$ be odd, so we write $k=2m-1$, $m\in
\mathbb{N}
$, and the proof is proven by induction on $m$.

\textbf{a)} For $m=2$ we have
\begin{equation*}
A_{2^{3}}=\left(
\begin{tabular}{llll|llll}
$0$ & $0$ & $0$ & $0$ & $0$ & $\mathbf{1}$ & \multicolumn{1}{|l}{$0$} & $0$
\\
$0$ & $0$ & $0$ & $0$ & $\mathbf{1}$ & $0$ & \multicolumn{1}{|l}{$0$} & $0$
\\ \cline{5-8}
$0$ & $0$ & $0$ & $0$ & $0$ & $0$ & \multicolumn{1}{|l}{$\mathbf{1}$} & $0$
\\
$0$ & $0$ & $0$ & $0$ & $0$ & $0$ & \multicolumn{1}{|l}{$0$} & $\mathbf{1}$
\\ \hline
$0$ & $0$ & $0$ & $\mathbf{1}$ & $0$ & $0$ & $0$ & $0$ \\
$0$ & $0$ & $\mathbf{1}$ & $0$ & $0$ & $0$ & $0$ & $0$ \\
$0$ & $\mathbf{1}$ & $0$ & $0$ & $0$ & $0$ & $0$ & $0$ \\
$\mathbf{1}$ & $0$ & $0$ & $0$ & $0$ & $0$ & $0$ & $0$%
\end{tabular}%
\right) =\left(
\begin{tabular}{llll|llll}
&  &  &  & $I_{2^{3-2}}^{\ast }$ &  & \multicolumn{1}{|l}{} &  \\
&  &  &  &  &  & \multicolumn{1}{|l}{} &  \\ \cline{5-8}
&  &  &  &  &  & \multicolumn{1}{|l}{} &  \\
&  &  &  &  &  & \multicolumn{1}{|l}{} & $I_{2^{3-2}}$ \\ \hline
&  &  &  &  &  &  &  \\
&  & $I_{2^{3-1}}^{\ast }$ &  &  &  &  &  \\
&  &  &  &  &  &  &  \\
&  &  &  &  &  &  &
\end{tabular}%
\right)
\end{equation*}

\textbf{b)} By hypothesis of induction we assume that $A_{2^{2n-1}}$ given by
(\ref{F_17}) is true for $2 \leq n<m$ and we have to prove
that it is true for $n=m$.

On one side, by theorem \ref{T_2} we have $A_{2^{k}}=\left(
\begin{tabular}{c|c|c}
$0$ & $A_{2^{k-2}}$ & $0$ \\ \hline
$0$ & $0$ & $I_{2^{k-2}}$ \\ \hline
$I_{2^{k-1}}^{\ast }$ & $0$ & $0$%
\end{tabular}%
\right) $ therefore
\begin{equation}
A_{2^{2m-1}}=\left(
\begin{tabular}{c|c|c}
$0$ & $A_{2^{\left( 2m-1\right) -2}}$ & $0$ \\ \hline
$0$ & $0$ & $I_{2^{\left( 2m-1\right) -2}}$ \\ \hline
$I_{2^{\left( 2m-1\right) -1}}^{\ast }$ & $0$ & $0$%
\end{tabular}%
\right) =\left(
\begin{tabular}{c|c|c}
$0$ & $A_{2^{2\left( m-1\right) -1}}$ & $0$ \\ \hline
$0$ & $0$ & $I_{2^{\left( 2m-1\right) -2}}$ \\ \hline
$I_{2^{\left( 2m-1\right) -1}}^{\ast }$ & $0$ & $0$%
\end{tabular}%
\right)   \label{F_29}
\end{equation}

On the other side as $\left( 2m-1\right) -2=2\left( m-1\right) -1$
is odd and $m-1<m$, by using the induction hypothesis it results
\begin{equation}
A_{2^{2\left( m-1\right) -1}}=\left(
\begin{tabular}{c|c|c}
&
\begin{tabular}{l|l}
$I_{2^{1}}$ &  \\ \hline
& $I_{2^{1}}^{\ast }$%
\end{tabular}
&  \\ \hline
&  & $\mathbf{\ddots }$ \\ \hline
\begin{tabular}{ll}
&  \\
$I_{2^{^{2\left( m-1\right) -1-1}}}^{\ast }$ &
\end{tabular}
&  &
\end{tabular}%
\right)   \label{F_30}
\end{equation}%
then after substituting (\ref{F_30}) in (\ref{F_29}) we finally get
\begin{equation*}
A_{2^{k}}=\left(
\begin{tabular}{l|l}
&
\begin{tabular}{l|l}
\begin{tabular}{l|l}
&
\begin{tabular}{l|l}
\begin{tabular}{l|l}
$I_{2^{1}}^{\ast }$ &  \\ \hline
& $I_{2^{1}}$%
\end{tabular}
&  \\ \hline
& $\mathbf{\ddots }$ \\ \cline{2-2}
\end{tabular}
\\ \hline
\begin{tabular}{ll}
& $I_{2^{k-3}}^{\ast }$ \\
&
\end{tabular}
&  \\ \cline{1-1}
\end{tabular}
&  \\ \hline
&
\begin{tabular}{llll}
&  &  &  \\
&  &  &  \\
&  & $I_{2^{k-2}}$ &  \\
&  &  &
\end{tabular}%
\end{tabular}
\\ \hline
\multicolumn{1}{c|}{$%
\begin{tabular}{lllllll}
&  &  &  &  &  &  \\
&  &  &  &  &  &  \\
&  &  &  &  &  &  \\
&  &  & $I_{2^{k-1}}^{\ast }$ &  &  &  \\
&  &  &  &  &  &  \\
&  &  &  &  &  &  \\
&  &  &  &  &  &
\end{tabular}%
$} &
\end{tabular}%
\right)
\end{equation*}
for $k=2m-1$, as we wanted to prove.
\end{proof}

\section{Discussion}

The matricial representation leads to some practical as well as fundamental considerations.
Researches in any field have a good background in linear algebra and it is important to build bridges with other disciplines.
However, the main advantage of matricial representation on permutation
$\sigma _{2^{k}}^{-1}$ is that it can handle a set of particles, instead of
one as $\sigma _{2^{k}}^{-1}$ does; furthermore, the set of particles can
have physical properties. Let be a physical system such that its states
undergo a period doubling cascade. For the sake of simplicity we suppose
that there have been two doubling bifurcations, that is $k=2$, so
\begin{equation*}
A_{2^{2}}=%
\begin{pmatrix}
0 & 0 & 1 & 0 \\
0 & 0 & 0 & 1 \\
0 & 1 & 0 & 0 \\
1 & 0 & 0 & 0%
\end{pmatrix}%
\text{ and\textbf{\ }}\sigma _{2^{2}}^{-1}=%
\begin{pmatrix}
1 & 2 & 3 & 4 \\
1 & 4 & 2 & 3%
\end{pmatrix}%
\end{equation*}%
If the physical system has three particles with properties $a,b,c$ in the states $1,2$ and $4$ we write
$\begin{pmatrix}
a \\
b \\
0 \\
c%
\end{pmatrix}$ and its evolution is given by
$\begin{pmatrix}
0 & 0 & 1 & 0 \\
0 & 0 & 0 & 1 \\
0 & 1 & 0 & 0 \\
1 & 0 & 0 & 0%
\end{pmatrix}%
\begin{pmatrix}
a \\
b \\
0 \\
c%
\end{pmatrix}%
=%
\begin{pmatrix}
0 \\
c \\
b \\
a%
\end{pmatrix}$, that is, the particle with the property $"a"$ has moved from state $1$ to $4$ and so on.
If we wanted to calculate the evolution by using $\sigma _{2^{2}}^{-1}$
we would have to use the permutation three times, furthermore the permutation cannot manage properties in a particle.
Obviously, as particles and states increase the use of $\sigma _{2^{k}}^{-1}$ is completely useless
from a practical point of view.

On the other hand we could write
$\begin{pmatrix}
\frac{1}{2} \\
\frac{1}{2} \\
-\frac{1}{2} \\
-\frac{1}{2}%
\end{pmatrix}$
to indicate that there are two particles with spin $\frac{1}{2}$
in the states $1$ and $2$, and two particles with spin $-\frac{1}{2}$ in
the states $3$ and $4$. So the evolution is%
\begin{equation}
\mathbf{\ }%
\begin{pmatrix}
0 & 0 & 1 & 0 \\
0 & 0 & 0 & 1 \\
0 & 1 & 0 & 0 \\
1 & 0 & 0 & 0%
\end{pmatrix}%
\begin{pmatrix}
\frac{1}{2} \\
\frac{1}{2} \\
-\frac{1}{2} \\
-\frac{1}{2}%
\end{pmatrix}%
=-%
\begin{pmatrix}
\frac{1}{2} \\
\frac{1}{2} \\
-\frac{1}{2} \\
-\frac{1}{2}%
\end{pmatrix}
\label{F_31}
\end{equation}

However $\sigma _{2^{k}}^{-1}$ can not handle the evolution of particles
with spin or any other physical property. If we pay attention to (\ref{F_31})
we see an important fact: eigenvectors and eigenvalues. This
leads us to one fundamental question:
What is the physical meaning of eigenvalues and eigenvectors of $A_{2^{k}}$?
On the other hand any state of the system can be decomposed into a linear combination of eigenvectors.

The idea $k\rightarrow \infty $ has been discussed in the introduction, in order
to understand analytically the transition to chaos. This same idea makes one
wonder what is the relationship between $A_{2^{k}},k\rightarrow \infty $
and the matricial representation of the linearized period-doubling operator,
both of them underlying in period doubling cascade.

\end{document}